\newcommand{\NP}{\textsf{NP}}
\newcommand{\Opt}{\textsf{Opt}}
\renewcommand{\b}[1]{\mathbf{#1}}
\newcommand{\MaxCSP}{\textsc{Max CSP}}
\newcommand{\MaxPartCSP}{\textsc{Max PartCSP}}
\newcommand{\R}{\mathbb{R}}
\newcommand{\scalprod}[1]{\left<  #1 \right> }
\DeclareMathOperator{\Odd}{Odd}
\DeclareMathOperator{\sign}{sign}
\DeclareMathOperator{\poly}{poly}
\DeclareMathOperator{\supp}{supp}
\DeclareMathOperator*{\E}{\mathbb{E}}
\DeclareMathOperator*{\Var}{Var}
\DeclareMathOperator{\Inf}{Inf}
\newtheorem{theorem}{Theorem}[section]
\newtheorem{lemma}[theorem]{Lemma}
\newtheorem{proposition}[theorem]{Proposition}
\newtheorem{conjecture}[theorem]{Conjecture}
\newtheorem{claim}[theorem]{Claim}
\newtheorem{fact}[theorem]{Fact}
\theoremstyle{remark}
\theoremstyle{definition}
\newtheorem{definition}[theorem]{Definition}
\title{A Characterization of Approximation Resistance for Even
  $k$-Partite CSPs}
\author{
  Per Austrin\thanks{Work done while at the University of Toronto supported by NSERC.}\\Aalto University and KTH Royal Institute of Technology\\
  \and
  Subhash Khot\thanks{Research supported by NSF Expeditions grant CCF-0832795, NSF Waterman Award and BSF grant 2008059.}\\New York University and University of Chicago
}
\begin{document}

\maketitle

\begin{abstract}
  A constraint satisfaction problem (CSP) is said to be
  \emph{approximation resistant} if it is hard to approximate better
  than the trivial algorithm which picks a uniformly random
  assignment.  Assuming the Unique Games Conjecture, we give a
  characterization of approximation resistance for $k$-partite CSPs
  defined by an even predicate.
\end{abstract}

\thispagestyle{empty}
\addtocounter{page}{-1}
\clearpage

\section{Introduction}

In the past 20 years, there has been a significant amount of work done
on understanding the approximability of various \emph{constraint
  satisfaction problems} (CSPs).

For the purposes of this paper, a CSP is defined by a $k$-ary
predicate $P: \{-1,1\}^k \rightarrow \{0,1\}$ over a Boolean alphabet.\footnote{As is common, 
the input bits are written in the $\{-1,1\}$ notation with $-1$ interpreted as logical 
{\sf True} and $1$ as logical {\sf False}. Also ``parity" corresponds to taking product of the 
bits: odd parity means the product is $-1$ and even parity means the product is $1$.}  
An instance consists of a set of \emph{constraints}, each of which
dictates that $P$ applied to some list of $k$ literals should be
satisfied (a literal is a variable or the negation of a variable).
The objective is to find an assignment to the variables so as to
maximize the number of satisfied constraints.
Two well-known examples are \textsc{Max $k$-Sat} (where $P$
is the disjunction of the $k$ input bits) and \textsc{Max $k$-Lin}
(where $P$ is the parity of the $k$ input bits).

Essentially every \textsc{Max CSP} is \NP{}-hard (the exception being when $P$
only depends on one of its input bits).  In terms of approximability,
it is easy to see that choosing a uniformly random assignment to the variables,
without even looking at the instance, yields an approximation ratio of
$|P^{-1}(1)|/2^k$, where $|P^{-1}(1)|$ is the number of inputs in
$\{-1,1\}^k$ that satisfy $P$.

Improving upon this trivial algorithm turns out to be surprisingly
difficult.  In a groundbreaking paper, Goemans and Williamson
\cite{GW95} used semidefinite programming (SDP) to give improved
approximation algorithms for \textsc{Max 2-Sat} and \textsc{Max
  2-Lin}.  SDP was soon used to give better approximation algorithms
for many other problems as well, but for some CSPs, perhaps most
prominently \textsc{Max 3-Sat} and \textsc{Max 3-Lin}, no improvement
over the random assignment algorithm was found.  Then, in a new
breakthrough, H{\aa}stad \cite{Has01} showed that such an improvement
would not be possible: approximating \textsc{Max 3-Sat} within $7/8 +
\epsilon$ or \textsc{Max 3-Lin} within $1/2+\epsilon$ for some
$\epsilon > 0$ is \NP{}-hard.  In other words, \textsc{Max 3-Sat} and
\textsc{Max 3-Lin} have the remarkable property that the completely
mindless random assignment algorithm is optimal!

CSPs which have this property -- that they are \NP{}-hard to
approximate within $|P^{-1}(1)|/2^k + \epsilon$ -- are called
\emph{approximation resistant}.  Following H{\aa}stad's initial
result, many more CSPs have been shown to be approximation resistant
\cite{GLST98,ST00,EH08,Has05}.  Fairly quickly, a complete
characterization of approximation resistance for predicates of arity
three was found: $P: \{-1,1\}^3 \rightarrow \{0,1\}$ is approximation
resistant if and only if $P$ accepts all inputs of odd parity, or if
it accepts all inputs of even parity \cite{Has01,Zwi98}.

However, the next small case, predicates of arity $4$, is still not
completely classified, and it is not at all clear whether there is a
nice, clean characterization. We would like to emphasize that by a characterization we 
mean a necessary and sufficient condition.  
Modulo symmetries, there are $400$
non-constant predicates of arity $4$.  Hast \cite{Has05} showed $275$
of these to be approximable, $79$ of them to be approximation
resistant, and left the status of the remaining $46$ open.

In recent years, progress has been made on our understanding of
approximation resistance under the assumption of the \emph{Unique
  Games Conjecture} (UGC) \cite{Kho02}.  The first author and Mossel
\cite{AM09} proved that assuming the UGC, $P$ is approximation
resistant if there exists an unbiased pairwise independent
distribution over $\{-1,1\}^k$ supported on $P^{-1}(1)$.  Using this
condition, it can be shown that as $k \rightarrow \infty$, an
overwhelming fraction of all predicates are in fact approximation
resistant \cite{AH11}.  A somewhat more (complicated and) general sufficient 
condition is known \cite{AH12}.  As in \cite{AM09}, this condition is
in terms of the biases and pairwise correlations of distributions
supported on $P$.  At this point, it seems unlikely that there is a
clean characterization (necessary and sufficient), but one can hope that approximation resistance
is at least decidable.

Relevant here is the work of Raghavendra \cite{Rag08}, which
shows assuming the UGC that for any CSP, its approximability threshold
is determined by the integrality gap of a natural SDP relaxation for
the problem.  Furthermore, Raghavendra and Steurer \cite{RS09} show
that this integrality gap can be approximated to within an additive error
$\epsilon$ (in time doubly exponential in $\epsilon$).

This ``almost'' shows that it is decidable to determine whether a
CSP is approximation resistant.  However, as we have no a priori bound
on the error $\epsilon$ needed, it only shows that it is recursively
enumerable to determine whether a CSP is approximable.  Note that, for
every $k$ there is a smallest gap $\epsilon_k$ such that any
approximable predicate $P$ on $k$ bits can be approximated within at
least $|P^{-1}(1)|/2^k + \epsilon_k$.  If this number $\epsilon_k$ can
be computed, approximation resistance would be decidable, but it is
possible (though seemingly unlikely) that $\epsilon_k$ tends to $0$
faster than any computable function.

\subsection{Our Contribution}

The strength of \cite{Rag08}, namely that it works in a black-box
fashion for any CSP, is in some sense a weakness in this setting, as
it is not explicit and does not give any insight into what structural
properties cause a predicate to be approximation resistant.  In this
paper, we make progress towards an explicit characterization of
approximation resistance.  We restrict the class of CSPs we study in
two ways.
\begin{enumerate}
\item We only consider \emph{$k$-partite} instances.  In a $k$-partite
  instance, the variables are grouped into $k$ layers, and in each
  constraint, the literal passed as the $i$'th argument to $P$ comes
  from the $i$'th layer.
\item We only consider $P$ which are \emph{even}.  $P$ is even if
  $P(x) = P(- x)$ for every $x \in \{-1,1\}^k$, where $-x$ denotes
  bitwise negation of $x$.
\end{enumerate}

We refer to this as the $\MaxPartCSP(P)$ problem.
Our main contribution is an explicit necessary and sufficient characterization (assuming the UGC) of
when $\MaxPartCSP(P)$ is approximation resistant.  As in the case of
\cite{AM09} and its generalizations, our condition is based on the
existence of certain distributions $\mu$ over the set of satisfying
assignments of $P$ and furthermore the conditions on these
distributions depend only on their pairwise correlations
$\E_{\mu}[x_ix_j]$.

In order to properly state the characterization, we need to make a few
definitions.

\begin{definition}
  Let $G = (S,E)$ be a multigraph with vertex set $S \subseteq [k]$ and no self-loops. 
  For a correlation matrix $\rho \in \R^{k \times k}$ we define
  $\rho(G) = \prod_{ij \in E} \rho_{ij}$.  For a distribution
  $\Lambda$ over $k\times k$ correlation matrices we define
  $\Lambda(G) = \E_{\rho \in \Lambda}[\rho(G)]$.
\end{definition}

The key part of our definition is the existence of distributions
$\Lambda$ over correlation matrices each of which arises from a distribution over
$P^{-1}(1)$ -- we refer to these as $P$-supported correlation matrices -- such
that $\Lambda(G)$ vanishes on certain graphs.  Specifically:

\begin{definition}
  Let $\Lambda$ be a distribution over $k \times k$ correlation
  matrices.  We say that $\Lambda$ is \emph{$m$-vanishing on $P$} if:
  \begin{enumerate}
  \item $\Lambda$ is a distribution over $P$-supported correlation matrices.
  \item For every $S \subseteq [k]$ such that $\hat{P}(S) \ne 0$, and every
    odd-degree multigraph $G$ on $S$ with at most $m$ edges, it holds
    that $\Lambda(G) = 0$.
  \end{enumerate}
\end{definition}

Here $\hat{P}(S)$ denotes the Fourier coefficient of the predicate $P$ on the set $S$ 
(i.e.\ the coefficient of the monomial $\prod_{i\in S} x_i$ when $P$ is written as a multi-linear 
polynomial). Now we can state our main result.

\begin{theorem}
  \label{thm:main}
  Assuming the UGC, $\MaxPartCSP(P)$ is approximation resistant if and
  only if for every positive integer $m$ there exists a distribution $\Lambda$ which is
  $m$-vanishing on $P$.
\end{theorem}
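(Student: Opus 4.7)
I would prove the two directions of the equivalence separately.

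\textbf{Sufficiency (hardness).} Given an $m$-vanishing distribution $\Lambda_m$ for every $m$, I would follow the standard UG-based dictatorship-test template pioneered in \cite{AM09}. The test samples a correlation matrix $\rho \sim \Lambda_m$, then samples $k$ partite-correlated noise strings whose pairwise across-layer correlations are $\rho$ (with a small additional noise parameter $\eta$ for hypercontractivity), and applies $P$. Completeness follows because $\Lambda_m$ is a distribution over $P$-supported correlation matrices, so dictator functions within each layer produce a sample from $P^{-1}(1)$. For soundness on low-influence $f_1,\dots,f_k$, invoke the partite invariance principle to replace each $f_i$ with its Gaussian analogue $g_i$, and expand
\[
\E_{\rho\sim\Lambda_m}\E\bigl[P(g_1,\dots,g_k)\bigr] \;=\; \sum_{S:\,\hat P(S)\neq 0} \hat P(S)\, \E_{\rho\sim\Lambda_m}\E\!\left[\prod_{i\in S} g_i\right].
\]
The inner Gaussian expectation, computed by Hermite expansion plus Wick's formula, produces sums over multigraphs $G$ on $S$ (whose vertex degrees match the Hermite degrees of the $g_i$) of $\prod_{ij\in E(G)} \rho_{ij}$, weighted by Hermite coefficients; averaging in $\rho$ turns these into $\Lambda_m(G)$. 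The odd-degree contributions of total edge-count $\le m$ vanish by the $m$-vanishing hypothesis; the even-degree contributions are absorbed using evenness of $P$ (and the fact that even Hermite parts cannot by themselves bias an even predicate away from $\hat P(\emptyset)$); and contributions of total degree exceeding $m$ are suppressed by the noise $\eta$ via hypercontractivity. Choosing $m$ sufficiently large relative to $\eta$ yields soundness $|P^{-1}(1)|/2^k + O(\eta)$ and thus, via the standard \cite{Kho02} machinery, approximation resistance of $\MaxPartCSP(P)$.

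\textbf{Necessity (algorithm).} Assume that, for some fixed $m$, no $m$-vanishing distribution exists. The set of $\Lambda$ over $P$-supported correlation matrices is a convex set, and the $m$-vanishing conditions are linear in $\Lambda$, so by LP / convex duality there exist coefficients $c_G$ (indexed by odd-degree multigraphs $G$ with $|E(G)|\le m$ on sets $S$ with $\hat P(S)\neq 0$) such that
\[
Q(\rho) \;:=\; \sum_{G} c_G\, \rho(G) \;\geq\; \delta \;>\; 0 \qquad \text{for every $P$-supported } \rho.
\]
I would then use a level-$\poly(m)$ Lasserre / SoS SDP relaxation: on each constraint, the SDP's local pseudo-distribution is (up to lower-order error) a $P$-supported correlation matrix, so $Q$ is bounded below on it. Combined with a Raghavendra-style Gaussian rounding of the SDP vectors and a term-by-term comparison of the rounded objective with $|P^{-1}(1)|/2^k = \hat P(\emptyset)$, the positivity of $Q$ certifies a $\Omega_m(\delta)$ advantage over the random-assignment threshold on any sufficiently satisfiable instance, contradicting approximation resistance.

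\textbf{Main obstacle.} The most delicate step is the soundness analysis of the dictatorship test: correctly identifying the multigraph $G$ that arises from each product of Hermite expansions, verifying that its vertex-degree profile is odd precisely when the contribution is not already accounted for by evenness, and combining $m$-vanishing with hypercontractive truncation to kill every remaining error term simultaneously. A secondary challenge is the converse direction, where one must ensure that the SDP pseudo-distributions are sufficiently close to honest distributions over $P^{-1}(1)$ for the dual certificate $Q$ to survive both the pseudo-distribution approximation and the rounding step.
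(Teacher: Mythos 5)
Your high-level architecture (dictatorship test plus invariance and Wick expansion for hardness; LP duality plus SDP rounding for the algorithm) matches the paper's, but both directions have a genuine gap at exactly the point the paper has to work hardest.

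In the hardness direction, the step ``the even-degree contributions are absorbed using evenness of $P$'' does not work, and you cannot get soundness without an additional idea. The $m$-vanishing hypothesis only forces $\Lambda(G)=0$ for \emph{odd-degree} multigraphs $G$, so after the Gaussian/Wick expansion the terms coming from even Fourier (or Hermite) parts of the $f_i$ survive, and evenness of $P$ gives no cancellation: for instance, if $P(1,\ldots,1)=1$ (which evenness does not exclude), the constant functions $f_i\equiv 1$ have no influential coordinates yet are accepted with probability $1$, so the test as you describe it is simply unsound. The missing idea is \emph{folding}: the hardness reduction must force the tested functions to be odd (implemented via negated literals), and then every multigraph $H$ arising in the expansion has vertex degrees $|T_i|$ all odd, so only odd-degree graphs appear and $m$-vanishing kills every non-constant term. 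This is also why the characterization mentions odd-degree graphs at all. A smaller remark: the paper does not need your hypercontractive truncation of high-degree terms; since the hypothesis provides an $m$-vanishing $\Lambda$ for \emph{every} $m$, it simply takes $m=kn$ (and the degree truncation to $1/\delta$ is already built into the invariance statement), which is cleaner than balancing $m$ against a noise rate $\eta$.

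In the algorithmic direction, the dual certificate $\sum_G \gamma_G\,\rho(G)>\delta$ is the right starting point, but ``Lasserre/SoS plus Raghavendra-style Gaussian rounding plus term-by-term comparison'' does not specify, and would not by itself yield, a rounding whose advantage is proportional to $\rho(G)$ for a prescribed multigraph $G$; generic moment-matching roundings give no handle on an individual monomial $\chi_{S}$. The paper's key device is an explicit \emph{monomial rounding}: for each edge $e$ of $G$ sample an independent Gaussian vector $\b{g}_e$, assign to a variable in layer $i$ the (truncated, scaled) product $\prod_{e\in E(i)}\scalprod{\b{g}_e,\b{v}_{i,j}}$, so that the expectation of $\prod_{i\in S}\alpha_{i,V(i)}$ factors over edges into $\rho(G)/B^{|S|}$; one then picks the graph $G_t$ with probability proportional to $|\gamma_t/\hat{P}(S_t)|\cdot B^{|S_t|}$, zeroes out the other layers, and applies random parity-constrained sign flips across layers both to kill every monomial $S\neq S_t$ and to fix the sign of the contribution. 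The odd-degree property is also what makes negated literals harmless here ($\rho(G_t)$ picks up exactly the factor $\chi_{S_t}(b)$), a point your plan does not address. Finally, the basic SDP relaxation (with genuine local distributions per constraint) already suffices; no Lasserre hierarchy or pseudo-distribution approximation argument is needed beyond the observation that on a $(1-\epsilon)$-satisfiable instance most local distributions are nearly $P$-supported, so the certificate still gives $\sum_t\gamma_t\rho(G_t)>\delta/2$.
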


Note that if there is a pairwise independent distribution supported on
$P$, i.e., if the identity matrix is $P$-supported, then taking
$\Lambda$ to be the singleton distribution on the identity matrix is
$m$-vanishing on $P$ for every $m$.  As such, this characterization 
generalizes the sufficient condition of \cite{AM09}.

Given $m$ and $P$, it is fairly easy to prove that the existence of an
$m$-vanishing $\Lambda$ on $P$ is decidable.  Hence the condition of
Theorem~\ref{thm:main} is recursively enumerable.  We feel that this
characterization is promising with respect to decidability.  For
instance, it is quite
possible that one can show some explicit upper bound on the largest
value of $m$ that one needs to check, which would immediately give
decidability. We also remark that, even though we do not prove it here, 
the characterization in Theorem \ref{thm:main} is equivalent to saying that there is a 
distribution $\Lambda$ which is $m$-vanishing on $P$ for all $m$ simultaneously.

\subsection{Proof Ideas}
\label{sec:proof-ideas}

We now briefly and informally outline the main ideas of the proof of
Theorem~\ref{thm:main}.

\paragraph{Algorithm.}

Suppose there is no $m$-vanishing distribution $\Lambda$ for some $m$.  By LP
duality, there are then constants $\{\gamma_G\}$ such that $\sum_{G}
\gamma_G \rho(G) > \delta$ for all $P$-supported $\rho$, where the sum
is over all odd-degree $G$ on at most $m$ edges.  Now, assume we are
given a solution to the basic SDP relaxation for $\MaxPartCSP(P)$ with
value $1$ (in reality it will only have value close to $1$ but this is
just a small technicality).  Then for each constraint we have a local
distribution $\mu$ and since the SDP value is $1$ its correlation
matrix $\rho$ is $P$-supported.  The basic idea is, very loosely, to
design a rounding algorithm which, given some graph $G$, finds an
assignment with value $|P^{-1}(1)|/2^k + \rho(G)$.  Picking a random
$G$ with probability proportional to $|\gamma_G|$ then gives an
assignment with value $|P^{-1}(1)|/2^k + \Omega(\delta)$.

To get an assignment with value $|P^{-1}(1)|/2^k + \rho(G)$, the idea
is to do as follows.  For simplicity, suppose $V(G) = [k]$ and
consider the monomial $\prod_{i=1}^k x_i$.  We can construct the
solution iteratively edge by edge, as follows.  Initially, set all
$x_i = 1$ (corresponding to the empty graph).  Then, for an edge $e =
(i,j)$, pick a standard Gaussian vector $\b{g}_e$, and multiply $x_i$
(resp.~$x_j$) by $\scalprod{\b{g}_e, \b{v}_i}$
(resp.~$\scalprod{\b{g}_e, \b{v}_j}$), where $\b{v}_i$ and $\b{v}_j$
are the vectors in the SDP solution corresponding to $x_i$ and $x_j$.
This operation has the effect of multiplying $\E[\prod_{i \in [k]}
  x_i]$ by a factor $\scalprod{\b{v}_i, \b{v}_j} = \rho_{ij}$ where
$\rho_{ij}$ is the correlation between $i$ and $j$ in the local
distribution on $x_1, \ldots, x_k$.  Repeating this for all edges of
the graph, we get $\E[\prod_{i \in [k]} x_i] = \rho(G)$, and we can
make sure that all other non-constant monomials have expectation $0$,
meaning that we get an advantage of $\rho(G)$ over $|P^{-1}(1)|/2^k +
\rho(G)$.  

To wrap this up and get the formal proof, there are som additional
technicalities to account for: the values assigned by the above
rounding are not Boolean-valued, we need to deal with negated
literals, and we need to take the magnitude of the Fourier
coefficients $\hat{P}(S)$ of $P$ into consideration.  The
formalization of the ``monomial rounding'' described above is given in
Lemma~\ref{lemma:round-layer} in Section~\ref{sec:round-layer} and its use to give a
non-trivial algorithm for $\MaxPartCSP(P)$ is then described in
Section~\ref{sec:full-algorithm}.

\paragraph{Hardness.} 

As is by now standard, the task of proving hardness boils down to
constructing a \emph{dictatorship test} using the predicate $P$.  The
dictatorship test gets oracle access to $k$ functions $f_1, \ldots,
f_k: \{-1,1\}^n \rightarrow \{-1,1\}$, and the question is whether
$f_1 = f_2 = \ldots = f_k$ are all equal to some dictatorship
function.  The test operates by picking $k$ inputs $x_1, \ldots, x_k$
and then accepts iff $P(f_1(x_1), f_2(x_2), \ldots, f_k(x_k)) = 1$.
The restriction to only using $P$ as the acceptance predicate is what
gives us hardness for $\MaxCSP(P)$ rather than an arbitrary CSP, and
the restriction that we have $k$ different functions and make one
query to each, instead of a single function, is precisely what gives
us hardness for $\MaxPartCSP(P)$ instead of $\MaxCSP(P)$.

Such a test is completely specified by the distribution of $(x_1,
\ldots, x_k)$.  To specify this we choose some very large $m$ and use
the $m$-vanishing distribution $\Lambda$ guaranteed to exist.  To
sample $(x_1, \ldots, x_k)$, we do as follows: first sample a
$P$-supported correlation matrix $\rho$ according to $\Lambda$, and
let $\mu$ be some $P$-supported distribution with correlation matrix
$\rho$.  Then, for each $i\in[n]$ we sample the $i$'th coordinate
$(x_1^i, \ldots, x_k^i)$ independently from $\mu$.  The completeness
of the test follows by $\mu$ being $P$-supported.  The soundness
follows using the invariance principle: first, we show that if the
functions $f_1, \ldots, f_k$ have low influence the acceptance
probability (appropriately arithmetized) can be well approximated by a
multilinear polynomial in Gaussian variables with the same second
moments as $x$.  Since higher moments of Gaussian variables are
determined by their covariance matrix, this multilinear polynomial
(and therefore also the acceptance probability) can be expressed as a
function of the covariance matrix, i.e., $\rho$, and it turns out that
all terms except for the constant $|P^{-1}(1)|/2^k$ are of the form
$\rho(G)$ for some odd-degree graph $G$ on less than $m$ edges.  Hence
taking the expectation over $\rho \sim \Lambda$, all non-constant
terms vanish.

\paragraph{Source of the various restrictions.}

It may be instructive to point out where the various restrictions we
impose come into play.

\begin{description}
\item[$k$-partiteness.]
  The fact that we know for each variable what ``role'' it will play
  is critical in allowing us to obtain the algorithm.  In particular,
  in the ``monomial rounding'' described above, it is important that
  any given variable corresponds to some given vertex of the graph $G$
  that we are using (the vertices of $G$ correspond to layers of the
  CSP instance).  If a vertex could appear as several different
  vertices of $G$ (i.e., in several different layers), it is not clear
  how to round it in such a way that the different occurences don't
  interfere with each other.

\item[Even predicates.]  This restriction is in some sense minor and
  more technical in nature.  It allows us to assume that the
  distributions $\mu$ supported on $P^{-1}(1)$ are unbiased, which
  simplies may arguments.  That said, it is not clear exactly how to
  generalize the present characterization to a general $P$.

\item[Odd-degree graphs.]  The reason why the characterization only
  involves odd-degree graphs is essentially the presence of negated
  literals.  First, in the algorithm it turns out that it is
  \emph{necessary} for the graphs to be odd-degree, as this
  essentially ensures that we don't have cancellations when dealing
  with negated literals.  Second, in the hardness result it turns out
  that it is \emph{sufficient} for the graphs to have odd-degree,
  because the functions $f_i$ we are testing can be assumed to be odd
  by the standard technique of folding, which is implemented by
  introducing negated literals.

\end{description}

\subsection{Discussion}

\paragraph{On the unnecessity of pairwise independence.}

It is known that there are approximation resistant predicates which do
not support a pairwise independent distribution.  A basic such example
is the predicate $GLST: \{-1,1\}^4 \rightarrow \{0,1\}$ defined by
$$
GLST(x_1, x_2, x_3, x_4) = \left\{\begin{array}{ll}
x_2 \ne x_3 & \text{if $x_1 = -1$}\\
x_2 \ne x_4 & \text{if $x_1 = 1$}
\end{array}\right..
$$ 
This predicate was shown to be approximation resistant by Guruswami et
al.\ \cite{GLST98}, but there is no pairwise independent distribution
supported on its accepting assignments -- indeed it is not difficult
to check that $x_2x_3+x_2x_4+x_3x_4 <0$ for all accepting inputs.  In
\cite{AH12}, Theorem VIII.6, a generalization of the pairwise
independence condition was given which also covers the $GLST$
predicate and in fact as far as we are aware cover all currently known
examples of approximation resistant predicates.

The condition of Theorem~\ref{thm:main} \emph{essentially} generalizes
the condition of \cite{AH12}.  We say ``essentially'' because Theorem
VIII.6 of \cite{AH12} in some cases allows for a condition referred to
as $\{i,j\}$-negativity, and it is not clear that this condition is
captured by Theorem~\ref{thm:main}.  The only example given in
\cite{AH12} using the $\{i,j\}$-negativity condition is not an even
predicate, so it is possible that this is a distinction between even
$P$ and general $P$.  On the other hand, it appears that for the
example given in \cite{AH12}, one can prove approximation resistance
without using $\{i,j\}$-negativity so it is not completely clear
whether allowing this adds any new predicates.  Another possibility is
that this is a distinction between $\MaxCSP(P)$ and $\MaxPartCSP(P)$,
because the proof in \cite{AH12} that $\{i,j\}$-negativity suffices
does not extend to partite instances.  In short, the situation is a
bit of a mystery and may warrant further study.

\paragraph{On $\MaxPartCSP(P)$ vis-\`{a}-vis $\MaxCSP(P)$.}

It is not known whether $\MaxPartCSP(P)$ behaves differently
from $\MaxCSP(P)$ with respect to approximation resistance.  Almost
all proofs of approximation resistance for $\MaxCSP(P)$, including
\NP-hardness results such as \cite{Has01, EH08}, can be adjusted to produce
$k$-partite instances, thereby showing approximation resistance for
$\MaxPartCSP(P)$.

However, one exception is the result of Raghavendra \cite{Rag08},
where it is not at all clear how to achieve this.  If it were the case
that the reduction of \cite{Rag08} can be adjusted to produce partite
instances, our restriction to $k$-partite instances would have been
without loss of generality (assuming the UGC), but as matters stand,
this can not be deduced.

Another exception is the hardness derived in \cite{AH12} from the
$\{i,j\}$-negativity condition mentioned above.

\subsection{Outline}

In Section~\ref{sec:notation} we introduce notation and terminology
used throughout the paper and state some known theorems that we need.
In Section~\ref{sec:deciding condition} we describe how to decide
whether an $m$-vanishing distribution exists.  We then proceed to
prove Theorem~\ref{thm:main}, giving an algorithm in
Section~\ref{sec:algorithm} and proving hardness in
Section~\ref{sec:hardness}. 

\section{Notation and definitions}
\label{sec:notation}

As is common, for convenience of notation we use $\{-1,1\}$ for
Boolean values rather than $\{0,1\}$.  Throughout, $P$ denotes a
$k$-ary predicate $P: \{-1,1\}^k \rightarrow \{0,1\}$ which we assume
to be \emph{even}, i.e., $P(x) = P(-x)$ for all $x$.

We say a distribution $\mu$ over $\{-1,1\}^k$ is \emph{$P$-supported}
if $\supp(\mu) \subseteq P^{-1}(1)$.  Similarly a correlation matrix
$\rho \in \R^{k \times k}$ is $P$-supported if there is a
$P$-supported $\mu$ such that $\rho_{ij} = \E_\mu[x_i x_j]$ for all
$i,j$.  Note that since $P$ is even, any $P$-supported distribution
can without loss of generality be assumed to be \emph{unbiased}, i.e.,
satisfying $\E_\mu[x_i] = 0$ for all $i$, as far as its correlation matrix is concerned
(since we can spread the probability mass equally on any pair of assignments $x$ and $-x$ without affecting
the correlation matrix).   

For the purposes of this paper, a \emph{multigraph} is a graph $G$
which may have multiple edges but no self-loops.  A multigraph has
\emph{odd degree} if every vertex of the graph has odd degree (when
edges are counted with multiplicities).  A key role in our
characterization is played by multigraphs $G$ whose vertex set is some
subset $S \subseteq [k]$, we refer to this as a \emph{multigraph on
  $S$}.

We write $S^n$ for the $n$-dimensional unit sphere (i.e., the set of
unit vectors in $\R^{n+1}$, and for two vectors $\b{u}, \b{v} \in
\R^n$ we write $\scalprod{\b{u}, \b{v}}$ for their standard inner
product.

\subsection{Partite Max-CSP and its SDP relaxation}

An instance $\Psi$ of $\MaxPartCSP(P)$ has $k \cdot n$ Boolean
variables indexed by $[k] \times [n]$.  Each constraint is of the form
$P(b_1 x_{1,j_1}, b_2 x_{2,j_2}, \ldots, b_k x_{k,j_k})$ for some
indices $j_1, \ldots, j_k$ and some signs $b_1, \ldots, b_k \in
\{-1,1\}$.

We use the following notation.  The constraints of an instance are
$(T_1, P_1)$, $(T_2, P_2)$, $\ldots$, where $T_i \subseteq [k] \times
[n]$ are the set of variables that the $i$'th constraint depends on --
exactly one from each layer -- and $P_i: \{-1,1\}^{T_i} \rightarrow
\{0,1\}$ is $P$ applied to the variables of $T_i$, possibly with some
variables negated.

We say that $\Psi$ is \emph{$\alpha$-satisfiable} if there is an assignment
to the variables which satisfies an $\alpha$ fraction of all the
constraints.

The basic SDP relaxation is described in Figure~\ref{fig:sdp}.     It has
as variables a vector $\b{v}_{i,j} \in S^{n \cdot k}$ for every
variable $x_{i,j}$, and an unbiased distribution $\mu_i$ over $\{-1,1\}^{T_i}$
for each constraint $(T_i, P_i)$. 
The fact that this is a relaxation follows from the
following observation: for any global integral assignment $\sigma \in \{-1,1\}^{k \cdot n}$, let ${\cal D}$ be the uniform distribution over
the pair of integral assignments $\sigma$ and $-\sigma$. Let $\mu_i$ be the restrictions of ${\cal D}$ to the respective sets $T_i$  
and ${\bf v}_{i,j} = \sigma_{i,j}$ be a $1$-dimensional vector. 
Then it is easy to see that this is a feasible solution to the SDP and its objective is same as the
fraction of constraints satisfied by $\sigma$ (or $-\sigma$). Here we use the
evenness of the predicate.

\begin{figure}[h]
  \framebox{
    \parbox{0.95\textwidth}{
      \begin{align*}
      \textbf{Maximize } & \sum_{i} \E_{x \sim \mu_{i}}[P_i(x)] \\
      \textbf{Subject to } & \textrm{$\mu_i$ is an unbiased distribution over $\{-1,1\}^{T_i}$} && \textrm{for every $i$} \\
      & \b{v}_{i,j} \in S^{n \cdot k} && \textrm{for all $(i,j) \in [k] \times [n]$}\\
      & \mu_i|_T = \mu_j|_T && \textrm{where $T = T_i \cap T_j$} \\
      & \scalprod{\b{v}_{i_1,j_1}, \b{v}_{i_2,j_2}} = \E_{x \sim \mu_l}[x_{i_1,j_1}x_{i_2,j_2}] && \textrm{for all $(i_1,j_1), (i_2,j_2) \in T_l$}
      \end{align*}
    }
  }
  \caption{SDP relaxation of $\MaxPartCSP(P)$.}
  \label{fig:sdp}
\end{figure}

\subsection{The Unique Games Conjecture}

In this section, we state the formulation of the Unique Games
Conjecture that we will use.

\begin{definition}\label{def:ug}
  An instance $\Lambda = (U, V, E, \Pi, [L])$ of \emph{Unique Games}
  consists of an unweighted bipartite multigraph $G = (U \cup V, E)$,
  a set $\Pi$ of \emph{constraints}, and a set $[L]$ of \emph{labels}.
  For each edge $e \in E$ there is a constraint $\pi_e \in \Pi$, which
  is a permutation on $[L]$.  The goal is to find a \emph{labeling}
  $\ell: U \cup V \rightarrow [L]$ of the vertices such that as many
  edges as possible are satisfied, where an edge $e = (u,v)$ is said
  to be satisfied by $\ell$ if $\ell(v) = \pi_e(\ell(u))$.
\end{definition}

\begin{definition} Given a Unique Game instance  $\Lambda = (U, V, E, \Pi, [L])$,
let $\Opt(\Lambda)$ denote the maximum
  fraction of simultaneously satisfied edges of $\Lambda$ by any
  labeling, i.\,e.,
$$ \Opt(\Lambda) := \frac{1}{|E|} \max_{\atop{ \ell: U \cup V
    \rightarrow [L]}} |\{\,e\,:\,\textrm{$\ell$ satisfies $e$}\,\}|. $$
\end{definition}

\begin{conjecture}(\cite{Kho02}) \label{conj:ugc}
  For every $\gamma > 0$, there is an integer  $L$ such that, for Unique Games
  instances $\Lambda$ with label set $[L]$ it is \NP-hard to
  distinguish between
  \begin{itemize}
  \item $\Opt(\Lambda) \ge 1-\gamma$
  \item $\Opt(\Lambda) \le \gamma$.
  \end{itemize}
\end{conjecture}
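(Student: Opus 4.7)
The statement to be proved is the Unique Games Conjecture itself, which is a famous open problem in complexity theory, not a routine claim with a known proof. I will therefore describe what a plausible line of attack would look like, following the trajectory laid out in the literature over the past decade, while being clear about where the irreducible obstacle lies.

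The overall plan is a reduction from a canonical \NP-hard problem (e.g., 3-\textsc{Sat}) through a chain of PCP-style constructions that progressively sharpens the label-cover structure until the constraints become bijections between labels. The starting point would be a label-cover instance obtained from a $2$-prover $1$-round protocol with projection constraints, amplified by parallel repetition (Raz-type) to achieve arbitrarily small soundness $\gamma$ with completeness $1-\gamma$. The challenge is that the projections in such instances are many-to-one, whereas unique games require one-to-one permutations $\pi_e$ on $[L]$, and naive attempts to symmetrize projections destroy soundness.

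The most promising concrete route is to build on the $2$-to-$2$ Games Theorem of Khot--Minzer--Safra, which establishes \NP-hardness of distinguishing value $1-\gamma$ from value $\gamma$ for label-cover instances whose constraints are $2$-to-$2$. The proof of that theorem exploits the pseudorandomness properties and edge-expansion of the Grassmann graph, controlling the non-expanding sets via a structural characterization. I would first recast the $2$-to-$2$ hardness in a form that tracks, for each constraint, a pair of candidate images in a fine-grained way, and then attempt an "alphabet-lifting" step: replace the label alphabet $[L]$ by $[L]^t$ for some growing $t$, group the $2$-to-$2$ constraints into blocks whose aggregate behavior on the product alphabet is bijective, and show that the soundness on the grouped instance remains small. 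The completeness side is essentially free; the soundness argument would need a Fourier-analytic invariance-style step on the Grassmann or Johnson scheme, extending the KMS edge-expansion theorem to capture the soundness loss incurred by the lifting.

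The main obstacle — and the reason the UGC remains open — is precisely this last step. Every known reduction that produces strictly unique (bijective) constraints either starts from a problem too weak to yield hardness (e.g., Max-Cut gadgets), or incurs a soundness-completeness tradeoff that leaves a gap $\Omega(1)$ between the two sides, rather than the $1-\gamma$ vs.\ $\gamma$ required here. Bridging $2$-to-$2$ hardness to $1$-to-$1$ hardness appears to require either a new pseudorandomness phenomenon on the Grassmann scheme strong enough to survive the symmetrization, or an entirely new PCP construction tailored to unique constraints. I would expect the bulk of the proof effort — and its eventual success or failure — to live entirely in this reduction from $2$-to-$2$ to $1$-to-$1$; everything upstream of it is by now standard, and everything downstream follows from \cite{Kho02} and the ensuing UGC-based hardness framework.
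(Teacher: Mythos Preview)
The statement is Conjecture~\ref{conj:ugc}, the Unique Games Conjecture. The paper does not prove it; it is stated as a conjecture and used as a hypothesis throughout (the main result Theorem~\ref{thm:main} is conditioned on it). You correctly recognized that this is a famous open problem with no known proof, and your discussion of the $2$-to-$2$ to $1$-to-$1$ obstacle is an accurate summary of where the difficulty lies. There is no ``paper's own proof'' to compare against, so no further comparison is possible.
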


\subsection{Analytic Tools}

Any Boolean function $f: \{-1,1\}^n \rightarrow \R$ can be written
uniquely as a multilinear polynomial
\[
f(x) = \sum_{T \subseteq [n]} \hat{f}(T) \chi_T(x),
\]
where $\hat{f}(T)$ are the Fourier coefficients of $f$ and $\chi_T(x)
= \prod_{i \in T} x_i$.  As such, $f$ can be viewed as a multilinear
polynomial $f: \R^n \rightarrow \R$ and this is the view we commonly
take.  We write $f^{\le d}$ for the part of $f$ that is of degree $\le
d$, i.e., $f^{\le d}(x) = \sum_{|S| \le d} \hat{f}(S) \chi_S(x)$.
\begin{fact}
  \begin{align*}
    \E[f(X)] &= \hat{f}(\emptyset) &
    \Var[f(X)] &= \sum_{T \ne \emptyset} \hat{f}(T)^2,
  \end{align*}
  where the expectations are over a uniform $X$ in $\{-1,1\}^n$.
\end{fact}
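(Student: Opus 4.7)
The plan is to derive both identities from the orthonormality of the characters $\chi_T$ under the uniform measure on $\{-1,1\}^n$, which in turn reduces to the elementary fact that for a uniform $X \in \{-1,1\}^n$ the coordinates $X_1, \ldots, X_n$ are independent with $\E[X_i] = 0$ and $X_i^2 = 1$.

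First I would compute $\E[\chi_T(X)]$ for each $T \subseteq [n]$. Since the $X_i$ are independent, for $T \neq \emptyset$ we have $\E[\chi_T(X)] = \prod_{i \in T} \E[X_i] = 0$, while $\E[\chi_\emptyset(X)] = 1$. Applying linearity of expectation to $f(x) = \sum_T \hat{f}(T) \chi_T(x)$ then immediately gives $\E[f(X)] = \hat{f}(\emptyset)$.

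For the variance, I would first establish orthonormality of the characters, namely $\E[\chi_S(X) \chi_T(X)] = \mathbf{1}[S = T]$. This follows because $\chi_S(x) \chi_T(x) = \prod_{i \in S \cap T} x_i^2 \cdot \prod_{i \in S \triangle T} x_i = \chi_{S \triangle T}(x)$ (using $x_i^2 = 1$), combined with the preceding paragraph applied to the set $S \triangle T$, which vanishes unless $S = T$. Then
\[
\E[f(X)^2] = \sum_{S,T} \hat{f}(S) \hat{f}(T) \, \E[\chi_S(X) \chi_T(X)] = \sum_{T} \hat{f}(T)^2,
\]
and subtracting $\E[f(X)]^2 = \hat{f}(\emptyset)^2$ yields $\Var[f(X)] = \sum_{T \neq \emptyset} \hat{f}(T)^2$.

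There is no real obstacle here; this is a textbook fact whose only subtlety is the identity $\chi_S \chi_T = \chi_{S \triangle T}$, which relies on the alphabet being $\{-1,1\}$ rather than $\{0,1\}$ so that squaring kills the overlap indices.
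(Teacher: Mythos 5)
Your proof is correct and is exactly the standard argument the paper implicitly relies on (the Fact is stated without proof as a textbook consequence of orthonormality of the characters $\chi_T$ under the uniform measure): compute $\E[\chi_T(X)]=\mathbf{1}[T=\emptyset]$, use linearity for the mean, and use $\chi_S\chi_T=\chi_{S\triangle T}$ for the second moment. Nothing further is needed.
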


\begin{definition}
  The \emph{influence} of the $i$'th variable on $f$ is
  \[
  \Inf_i(f) = \sum_{T \ni i} \hat{f}(T)^2,
  \]
  and the low-degree influence is
  \[
  \Inf^{\le d}_i(f) = \Inf_i(f^{\le d}) = \sum_{\substack{T \ni i\\|T| \le d}} \hat{f}(T)^2.
  \]
\end{definition}

As is common, the main analytic tool in our hardness result is the
invariance principle \cite{MOO10,Mos10}.  In particular, we have the
following theorem.

\begin{theorem}
  \label{thm:invariance}
  For every $k$, $\epsilon > 0$ there is a $\delta > 0$ such that the following
  holds.

  Let $\mu$ be an unbiased distribution over $\{-1,1\}^k$ with
  $\min_{x \in \{-1,1\}^k} \mu(x) \ge \epsilon$, $X$ be a random $k
  \times n$ matrix over $\{-1,1\}$ with each column distributed
  according to $\mu$, independently, and $G$ be a random $k \times n$
  matrix of standard Gaussians with the same covariance structure as
  $X$.

  Then for any $k$ multilinear polynomials $f_1, \ldots, f_k: \R^n
  \rightarrow \R$ with $\Inf_j^{\le 1/\delta}(f_i) \le \delta$ and $\Var[f_i] \le 1$ for all
  $i \in [k]$, $j \in [n]$, we have
  \[
  \left|\E_X\left[\prod_{i =1}^k f_i(X_i)\right] - \E_G\left[\prod_{i = 1}^k f^{\le 1/\delta}_i(G_i)\right]\right| \le \epsilon.
  \]
\end{theorem}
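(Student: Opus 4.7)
I would deduce Theorem~\ref{thm:invariance} from the multi-function invariance principle for correlated product spaces of Mossel~\cite{Mos10}, which generalises the original invariance principle of~\cite{MOO10}. The proof proceeds in two parts: (a) a discrete-to-Gaussian Lindeberg swap reducing $\E_X[\prod_i f_i^{\le d}(X_i)]$ to $\E_G[\prod_i f_i^{\le d}(G_i)]$, and (b) a truncation argument reducing $\E_X[\prod_i f_i(X_i)]$ to $\E_X[\prod_i f_i^{\le d}(X_i)]$. Here $d := 1/\delta$, and each part is arranged to contribute at most $\epsilon/2$ to the total error.

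\paragraph{Lindeberg swap.} For part (a), the truncated polynomials $f_i^{\le d}$ have degree at most $d$, variance at most $1$, and all influences at most $\delta$ (these being exactly the low-degree influences of $f_i$). I would enumerate the columns $t=1,\ldots,n$ of the input matrix and replace $X^{(t)} \sim \mu$ by the Gaussian column $G^{(t)}$ of matching covariance one at a time. Because $\mu$ and the Gaussian coincide on first and second moments, the zeroth, first, and second order Taylor terms of $\prod_i f_i^{\le d}$ along the swapped column cancel; the third-order remainder is then bounded using hypercontractivity, $\|f_i^{\le d}\|_q \le C(q)^d \|f_i^{\le d}\|_2$, combined with the per-column influence bound $\Inf_t(f_i^{\le d}) \le \delta$. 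Summing over the $n$ columns and the $k$ functions gives a total error of order $k\cdot C^d \sqrt{\delta}$, which is at most $\epsilon/2$ once $\delta$ is sufficiently small in terms of $k$ and $\epsilon$.

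\paragraph{Truncation.} Part (b) is the more delicate step, since the hypotheses give \emph{no} direct $L^2$ control on the high-degree tail $f_i^{>d}$. The standard remedy, as in~\cite{Mos10}, is to interpose a mildly smoothed intermediate $T_{1-\eta} f_i$, with $\eta$ a small power of $\delta$. Two observations drive the argument: (i) the product measure $\mu^{\otimes n}$ is hypercontractive with constants depending only on $\epsilon$ (since $\min_x \mu(x) \ge \epsilon$), which via H\"older bounds $|\E_X[\prod_i f_i(X_i)] - \E_X[\prod_i T_{1-\eta} f_i(X_i)]|$ by a small factor; and (ii) $\|T_{1-\eta} f_i - (T_{1-\eta} f_i)^{\le d}\|_2 \le e^{-\eta d/2}$, so truncating after smoothing is essentially free. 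The remaining discrepancy between $(T_{1-\eta} f_i)^{\le d}$ and $f_i^{\le d}$ on the low-degree part is $O(\eta d)$ in $L^2$; choosing e.g.\ $\eta = \delta^{3/4}$ balances all three terms and makes the total truncation error $o_\delta(1)$.

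\paragraph{Main obstacle.} The principal difficulty is the truncation step (b): one only has a low-degree influence hypothesis yet must compare $\E[\prod_i f_i(X_i)]$ to its degree-truncated analogue without any a priori bound on $\|f_i^{>d}\|_2$. This is resolved by the noise-smoothing workaround above, which crucially relies on hypercontractivity of $\mu^{\otimes n}$ with constants uniform in the specific measure $\mu$, and hence on the quantitative lower bound $\min_x \mu(x) \ge \epsilon$. Beyond this, the swap in part (a) is the classical~\cite{MOO10} argument lifted to \emph{vector-valued} coordinates, since each ``coordinate'' of our input is an entire $\mu$-distributed column of $k$ correlated entries rather than a single Boolean; this lifting is precisely the contribution of~\cite{Mos10} and is what lets us combine $k$ distinct functions into a single multi-function invariance statement.
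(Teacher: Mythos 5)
Your overall route is the same as the paper's: the paper proves Theorem~\ref{thm:invariance} simply by importing Mossel's correlated-space invariance machinery --- Theorem 4.2 of \cite{Mos10} for comparing the low-degree truncations under $X$ and $G$, and Lemma 6.2 of \cite{Mos10} (which exploits $\min_x \mu(x) \ge \epsilon$) for replacing $\E_X[\prod_i f_i(X_i)]$ by $\E_X[\prod_i f_i^{\le 1/\delta}(X_i)]$ --- together with one remark about value truncation. Your parts (a) and (b) correspond exactly to these two ingredients, so at the level of ideas there is no divergence from the paper.

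However, two concrete steps of your sketch fail as written. First, the parameter bookkeeping breaks because you hard-wire the degree cutoff $d = 1/\delta$ to the influence bound $\delta$: your swap error ``of order $k\, C^d \sqrt{\delta}$'' is $k\, C^{1/\delta}\sqrt{\delta}$, which diverges as $\delta \to 0$, and in step (b) the choice $\eta = \delta^{3/4}$ gives $\eta d = \delta^{-1/4} \to \infty$, so the $O(\eta d)$ discrepancy you need small and the $e^{-\eta d/2}$ tail you need small pull in opposite directions; no single $\eta$ balances them once $d = 1/\delta$. The standard remedy (and the way Mossel's statements are organized) is to decouple the parameters: fix the smoothing/degree scale as a function of $k$, $\epsilon$ and the min-probability alone, then require the influences to be small relative to that scale, and only at the end choose $\delta$ below both thresholds; one must then also argue, via the same correlation-decay lemma applied on the Gaussian side as well, that truncating at degree $1/\delta$ rather than at the fixed scale changes the Gaussian expectation negligibly, since the theorem's conclusion explicitly features $f_i^{\le 1/\delta}$. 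Second, boundedness: the hypotheses give only $\Var[f_i] \le 1$, and degree-truncated polynomials are not $[-1,1]$-valued, so your H\"older/hypercontractivity bounds on products of the other factors, as well as Mossel's Theorem 4.2 itself (whose values are truncated to $[0,1]$), cannot be invoked verbatim; this is precisely the technicality the paper handles by rescaling the truncation interval to $[-B,B]$ and showing the truncation changes the two expectations by at most $\epsilon$, and your sketch needs an analogous step.
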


Theorem~\ref{thm:invariance} can be derived using Theorem 4.2 and
Lemma 6.2 of \cite{Mos10}: using Lemma 6.2 it follows that
$\E_X\left[\prod_{i=1}^k f_i(X_i)\right]$ is close to
$\E_X\left[\prod_{i=1}^k f^{\le 1/\delta}_i(X_i)\right]$ for
sufficiently small $\delta$.  Then we can use Theorem 4.2 on the
functions $\{f_i^{\le 1/\delta}\}$.  In Theorem 4.2, the values of
$f_i^{\le 1/\delta}(X_i)$ and $f_i^{\le 1/\delta}(G_i)$ are truncated
to the range $[0,1]$.  By scaling this holds with $[0,1]$ replaced by
some interval $[-B,B]$.  It is not too hard to show that for
sufficiently large $B$ (as a function of $k$ and $\epsilon$)
truncation to the interval $[-B, B]$ does not change $\E[\prod_{i=1}^k
  f^{\le 1/\delta}_i(X_i)]$ (resp.~$\E[\prod_{i=1}^k f^{\le
    1/\delta}_i(G_i)]$) by more than $\epsilon$.

\subsection{Products of Gaussians}

We need the following Lemma about the expectation of a product of
gaussians in terms of their pairwise correlations.

\begin{lemma}
  \label{lemma:gaussproduct}
  Let $r$ be an integer and let $g_1, \ldots, g_r$ be gaussians
  with mean $0$, variance $1$, and covariance matrix $\rho$.  Then
  $$
  \E\left[\prod_{i=1}^r g_i\right] = \sum_{M} \prod_{ij \in M} \rho_{ij},
  $$ where $M$ ranges over all perfect matchings of the complete graph
  on $r$ vertices (if $r$ is even there are $(r-1)!!$ terms and if $r$
  is odd the expectation is $0$).
\end{lemma}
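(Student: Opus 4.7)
The key result here is a form of Isserlis' theorem (equivalently, Wick's theorem from physics), and I would prove it via the moment generating function of $(g_1, \ldots, g_r)$. Since $(g_1, \ldots, g_r)$ is a centered jointly Gaussian vector with covariance matrix $\rho$, its MGF admits the closed form
$$
M(t) \;:=\; \E\!\left[\exp(\scalprod{t, g})\right] \;=\; \exp\!\left(\tfrac{1}{2}\, t^T \rho\, t\right) \;=\; \exp\!\left(\tfrac{1}{2}\sum_{i,j} \rho_{ij}\, t_i t_j\right).
$$
Simultaneously, expanding $e^{\scalprod{t, g}}$ as a power series and taking termwise expectations shows that the coefficient of $t_1 t_2 \cdots t_r$ in $M(t)$ equals exactly $\E[g_1 g_2 \cdots g_r]$: in the degree-$r$ term of the Taylor series for $e^x$, the $r!$ multinomial orderings of $t_1, \ldots, t_r$ appearing in $\scalprod{t,g}^r$ exactly cancel the $1/r!$ prefactor.

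Thus it suffices to extract the coefficient of $t_1 t_2 \cdots t_r$ from the other form,
$$
M(t) \;=\; \sum_{n=0}^{\infty} \frac{1}{n!\, 2^n} \left(\sum_{i,j} \rho_{ij}\, t_i t_j\right)^{\!n}.
$$
Only the term $n = r/2$ can produce a multilinear monomial of total degree $r$, so when $r$ is odd the coefficient (and hence the expectation) is zero. For even $r$, expanding the $(r/2)$'th power and collecting the monomial $t_1 t_2 \cdots t_r$ corresponds to choosing an ordered sequence of $r/2$ pairs $(i_1, j_1), \ldots, (i_{r/2}, j_{r/2})$ whose union as a multiset is $\{1, 2, \ldots, r\}$, each such sequence contributing $\prod_\ell \rho_{i_\ell j_\ell}$ to the coefficient. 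Any fixed perfect matching $M$ of the complete graph on $[r]$ arises from exactly $2^{r/2}\,(r/2)!$ such sequences (the $(r/2)!$ orderings of the pairs and the two choices of orientation within each pair), and, using the symmetry $\rho_{ij} = \rho_{ji}$, all of them contribute the same value $\prod_{ij \in M} \rho_{ij}$. The $2^{r/2}(r/2)!$ exactly cancels the $1/(n!\, 2^n)$ prefactor, leaving the claimed formula $\sum_M \prod_{ij \in M} \rho_{ij}$.

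There is no real obstacle beyond careful combinatorial bookkeeping of these factors. As an alternative route, one could proceed by induction on $r$ via Stein's lemma, $\E[g_1 F(g)] = \sum_{i=1}^r \rho_{1i}\, \E[\partial_i F(g)]$ for smooth $F$: taking $F(g) = g_2 \cdots g_r$ reduces the statement to the case of $r-2$ Gaussians, and the recursion over matchings is then obtained by conditioning on the partner of vertex $1$. Both routes are entirely standard; the MGF approach has the added convenience that the vanishing for odd $r$ falls out automatically from parity of the exponent.
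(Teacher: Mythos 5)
The paper states Lemma~\ref{lemma:gaussproduct} without proof---it is the classical Isserlis/Wick formula, which the authors treat as a known fact---so there is no in-paper argument to compare against; your job was simply to supply a correct proof, and you have. The MGF computation is sound: the identity $\E[\exp(\scalprod{t,g})]=\exp(\tfrac12 t^{T}\rho t)$, the observation that the coefficient of the multilinear monomial $t_1\cdots t_r$ on the probabilistic side is exactly $\E[g_1\cdots g_r]$, and the combinatorial bookkeeping on the analytic side (only $n=r/2$ contributes, each perfect matching is hit by $2^{r/2}(r/2)!$ ordered, oriented pair-sequences, cancelling the $1/(n!\,2^n)$) are all correct, and the parity argument for odd $r$ comes out for free. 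The only point worth making explicit in a fully rigorous write-up is the interchange of expectation with the power series of $\exp(\scalprod{t,g})$; this is harmless here because the Gaussian MGF is finite for all $t$ and the series converges absolutely in $L^1$, but it deserves one sentence. Your alternative route via Stein's identity $\E[g_1 F(g)]=\sum_i \rho_{1i}\E[\partial_i F(g)]$ with $F(g)=g_2\cdots g_r$ is if anything the slicker induction (conditioning on the partner of vertex $1$ reproduces the recursion $(r-1)!!=(r-1)\cdot(r-3)!!$ over matchings directly), and either argument is perfectly adequate to support the use of the lemma in \eqref{eqn:prodexp}.
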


\section{Decidability of $m$-vanishing Distributions}
\label{sec:deciding condition}

\begin{proposition}
  Given $P$ and $m$, the existence of a $\Lambda$ which
  is $m$-vanishing on $P$ is decidable.
\end{proposition}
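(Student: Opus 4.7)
The plan is to reduce the question to the decidability of the existential theory of the reals. First I would observe that the universally quantified conditions in the definition of ``$m$-vanishing on $P$'' amount to only finitely many polynomial constraints: let $G_1, \ldots, G_N$ enumerate all odd-degree multigraphs $G$ on some $S \subseteq [k]$ with $\hat{P}(S) \ne 0$ and $|E(G)| \le m$, which is a finite list since $k$ and $m$ are fixed. Define $g_i(\rho) = \prod_{(a,b) \in E(G_i)} \rho_{a,b}$, a polynomial of degree at most $m$ in the entries of $\rho$. Each condition $\Lambda(G_i) = 0$ is then simply $\E_{\rho \sim \Lambda}[g_i(\rho)] = 0$.

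Next I would describe the set $\mathcal{C}$ of $P$-supported correlation matrices as a semialgebraic (in fact polyhedral) subset of $\R^{k \times k}$: $\rho \in \mathcal{C}$ iff there exists a vector $\mu \in \R^{P^{-1}(1)}_{\ge 0}$ with $\sum_x \mu(x) = 1$ and $\rho_{a,b} = \sum_x \mu(x) x_a x_b$ for all $a,b$. The existence of an $m$-vanishing $\Lambda$ is then equivalent to the origin of $\R^N$ lying in the convex hull of the set $A = \{(g_1(\rho), \ldots, g_N(\rho)) : \rho \in \mathcal{C}\}$.

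By Carathéodory's theorem, if the origin is in this convex hull then it is a convex combination of at most $N+1$ points of $A$. Hence existence of a suitable $\Lambda$ is equivalent to the following existential sentence over $\R$: there exist nonnegative reals $\lambda_1, \ldots, \lambda_{N+1}$ summing to $1$ and $P$-supported distributions $\mu_1, \ldots, \mu_{N+1}$ (each a vector in $\R^{P^{-1}(1)}_{\ge 0}$ summing to $1$) such that for every $i \in [N]$,
\[
\sum_{j=1}^{N+1} \lambda_j \prod_{(a,b) \in E(G_i)} \left( \sum_{x \in P^{-1}(1)} \mu_j(x) \, x_a x_b \right) = 0.
\]

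This is a system of finitely many polynomial equations and inequalities in finitely many real variables. By the Tarski--Seidenberg theorem, such existential sentences over the reals are decidable, completing the proof. The main subtlety is that a priori $\Lambda$ could be an arbitrary (possibly continuous) probability measure on $\mathcal{C}$; invoking Carathéodory to bound the support size by $N+1$ is what reduces the search to a finite-parameter semialgebraic problem and brings it into the decidable fragment.
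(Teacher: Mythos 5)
Your proposal is correct and follows essentially the same route as the paper's own proof: bound the support of $\Lambda$ via Carath\'{e}odory's theorem, encode the bounded-support distributions and their $P$-supported correlation matrices as a finite system of polynomial constraints, and appeal to Tarski's decidability of the (existential) theory of the reals. Your explicit handling of possibly continuous measures and the restriction to odd-degree graphs on sets with $\hat{P}(S) \ne 0$ are minor refinements of the same argument, not a different approach.
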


\begin{proof}
  There are $M \leq 2^k \cdot k^{2m}$ graphs $G_1,\ldots,G_M$ of interest namely 
  graphs with at most $m$ edges supported on some vertex set $S \subseteq [k]$. We need to decide whether
  there is a distribution $\Lambda$ on $P$-supported correlation matrices such that 
  $$ \E_{\rho \in \Lambda} [ (\rho(G_1), \ldots, \rho(G_M)) ] = {\bf 0}. $$
  By Carath\'{e}odory's theorem this implies that $\Lambda$ can be
  assumed to have support at most $M$.  Each $\rho$ in the support of
  $\Lambda$ can be represented using $|P^{-1}(1)| + k^2 + 1$ real
  variables, representing a distribution $\mu$ over $P^{-1}(1)$, its
  correlation matrix $\rho$, and finally its probability under
  $\Lambda$.  The constraints that each $\rho$ is the correlation
  matrix of the corresponding $\mu$ and that $\Lambda(G_i) = 0$ for
  every $G_i$ can be written as a finite number of polynomial equations
  in the $M(|P^{-1}(1)| + k^2 + 1)$ variables.\footnote{The variables 
  representing probabilities need to be non-negative. This can be effected by taking them to be squares of 
  respective variables.}   In other words the set of
  $m$-vanishing $\Lambda$ of support size $\le M$ form an algebraic set,
  so determining whether such $\Lambda$ exists boils down to
  determining whether this algebraic set is non-empty, which is
  decidable \cite{Tar51}.
\end{proof}

\section{Algorithm}
\label{sec:algorithm}

In this section we give an approximation algorithm with approximation ratio strictly larger than 
$|P^{-1}(1)|/2^k$  for predicates $P$
which do not satisfy the condition of Theorem~\ref{thm:main}.  Thus,
there exists an $m$ such that for every distribution $\Lambda$ (over
$P$-supported correlation matrices), there is an odd-degree multigraph
$G$ over $S \subseteq [k]$ with at most $m$ edges such that
$\hat{P}(S) \ne 0$ and $\Lambda(G) \ne 0$.  For the rest of this
section, fix this value of $m$.

\subsection{Rounding Monomials}
\label{sec:round-layer}

First, we give an algorithm which will allow us to ``pick up" a
contribution proportional to $\rho(G)$ for any monomial, where $\rho$
is the correlation matrix of the local distribution (given by the SDP)
on that monomial and $G$ is any graph.  Lemma~\ref{lemma:round-layer}
below formalizes the high-level idea given in
Section~\ref{sec:proof-ideas}.  Recall that the variables of the CSP
are partitioned into $k$ layers, there are $n$ variables in each
layer, and the SDP relaxation is as in Figure~\ref{fig:sdp}.

\begin{lemma}
  \label{lemma:round-layer}
  Let $S \subseteq [k]$ be a set of layers, $G$ be an odd-degree
  multigraph on $S$, and $\tau > 0$.  Then for all sufficiently large
  $B \ge \poly(\log 1/\tau)$ (where the polynomial depends only on
  $G$) there is a polynomial time algorithm which, given an SDP
  solution as in Figure~\ref{fig:sdp} outputs an assignment $\alpha: S
  \times [n] \rightarrow [-1,1]$ to the layers of $S$ such that the
  following holds.

  Let $V: S \rightarrow [n]$ be any choice of variables, one from each layer in 
  $S$.  Then
  \[
  \E\left[\prod_{i \in S} \alpha_{i,V(i)}\right] =  \frac{\rho(G)}{B^{|S|}} \pm \tau,
  \]
  where $\rho$ is the correlation matrix defined by the SDP solution
  on these $k$ variables, i.e., $\rho_{i_1,i_2} =
  \scalprod{\b{v}_{i_1,V(i_1)}, \b{v}_{i_2,V(i_2)}}$.
\end{lemma}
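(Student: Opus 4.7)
The plan is to implement the ``monomial rounding'' idea from Section~\ref{sec:proof-ideas} verbatim and then absorb the Boolean-range requirement via truncation. For each edge $e \in E(G)$ (counted with multiplicity) I sample an independent standard Gaussian vector $\b{g}_e$ of the same dimension as the SDP vectors, and for each $i \in S$, $j \in [n]$ set
\[
X_{i,j} := \prod_{e \ni i} \scalprod{\b{g}_e, \b{v}_{i,j}}, \qquad \alpha_{i,j} := \frac{\mathrm{trunc}_B(X_{i,j})}{B},
\]
where $\mathrm{trunc}_B$ clips its argument to $[-B, B]$. By construction $\alpha_{i,j} \in [-1, 1]$, and the procedure runs in polynomial time since $G$ has at most $m$ edges and each inner product is computable in time $\poly(n k)$.

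For any choice $V : S \to [n]$ the unnormalized product factorizes over edges: each $e = (i_1, i_2) \in E(G)$ contributes exactly one inner product inside $X_{i_1, V(i_1)}$ and one inside $X_{i_2, V(i_2)}$, giving
\[
\prod_{i \in S} X_{i, V(i)} = \prod_{e = (i_1,i_2) \in E(G)} \scalprod{\b{g}_e, \b{v}_{i_1, V(i_1)}} \scalprod{\b{g}_e, \b{v}_{i_2, V(i_2)}}.
\]
Independence of the $\b{g}_e$ across edges together with the identity $\E[\scalprod{\b{g}, \b{u}}\scalprod{\b{g}, \b{v}}] = \scalprod{\b{u}, \b{v}}$ for a single standard Gaussian $\b{g}$ yields $\E\bigl[\prod_{i\in S} X_{i, V(i)}\bigr] = \prod_e \rho_{i_1, i_2} = \rho(G)$, and division by $B^{|S|}$ produces the target leading term $\rho(G)/B^{|S|}$.

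It remains to show that the truncation perturbs this expectation by at most $\tau$, which amounts to bounding the unnormalized truncation error by $B^{|S|}\tau$. Writing $Y_i := X_{i, V(i)}$ and $\tilde{Y}_i := \mathrm{trunc}_B(Y_i)$, each $Y_i$ is a product of $d_i := \deg_G(i)$ independent standard Gaussians, so it has subexponential tails and all polynomial moments bounded in terms of $G$ only. I would apply the telescoping identity
\[
\prod_i Y_i - \prod_i \tilde{Y}_i = \sum_{i \in S} \Bigl(\prod_{j<i} \tilde{Y}_j\Bigr)\,(Y_i - \tilde{Y}_i)\,\Bigl(\prod_{j>i} Y_j\Bigr),
\]
use $|\tilde{Y}_j| \le B$ and $|Y_i - \tilde{Y}_i| \le |Y_i|\mathbf{1}[|Y_i|>B]$, and then Cauchy--Schwarz to separate $\E[Y_i^2 \mathbf{1}[|Y_i|>B]]^{1/2}$ (which decays faster than any inverse polynomial in $B$ by standard tail bounds for products of Gaussians) from $\E\bigl[\prod_{j>i} Y_j^2\bigr]^{1/2}$ (which is bounded in terms of $G$ only). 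Choosing $B = \poly(\log(1/\tau))$ with the polynomial depending on $|S|$ and the maximum degree of $G$ then makes each summand at most $B^{|S|-1}\cdot\tau$, and summing over $i \in S$ yields the desired total error $B^{|S|}\tau$.

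The main subtlety in this last step is calibrating the decomposition so the error scales as $B^{|S|}\tau$ and not as $B^{2|E(G)|}\tau$; the latter would be far too large for the $1/B^{|S|}$ normalization to kill. The telescoping achieves this because each summand has at most $|S|-1$ factors bounded by $B$ (the already-truncated ones), while the remaining pair of factors is controlled via $B$-independent moment bounds rather than another power of $B$. Once this scaling is in place, Gaussian concentration does the rest and the lemma follows.
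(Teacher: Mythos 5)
Your proposal is correct and follows essentially the same route as the paper's proof: one independent Gaussian vector per edge, $\beta_{i,j} = \prod_{e \ni i}\scalprod{\b{g}_e,\b{v}_{i,j}}$, truncation into $[-B,B]$ and normalization by $B$, with the untruncated expectation factoring over edges into $\rho(G)/B^{|S|}$ by independence. The only differences are cosmetic: the paper zeroes out values exceeding $B$ rather than clipping them, and it merely asserts the truncation error is of order $m\exp(-B^{2/m}/2)$, whereas your telescoping-plus-Cauchy--Schwarz argument supplies the details of that bound.
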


\begin{proof}
  The algorithm works as follows.  For each edge $e \in E(G)$, pick a
  standard Gaussian vector $\b{g}_e$, independently.  For a vertex $u
  \in V(G)$, let $E(u) \subseteq E(G)$ denote the set of edges
  incident on $u$.  For a variable $x_{i,j}$
  such that $i \in S$, set
  \[
  \beta_{i,j} = \prod_{e \in E(i)} \scalprod{\b{g}_e, \b{v}_{i,j}}.
  \]
  Then, set
  \[
  \alpha_{i,j} = \begin{cases}
    \frac{\beta_{i,j}}{B} & \text{if $|\beta_{i,j}| \le B$} \\
    0 & \text{otherwise}.
  \end{cases}
  \]
  Fix $V: S \rightarrow [n]$ as in the statement.
  Now let us analyze $\E[\prod_{i \in S} \alpha_{i, V(i)}]$.  Note
  that without the truncation when $|\beta_{i,V(i)}|$ exceeds $B$, the
  expectation would be exactly equal to (where in the second step below we use the independence of the Gaussians to move the expectation inside the product)
  \begin{align*}
 \E\left[\frac{1}{B^{|S|}} \prod_{i \in S} \beta_{i, V(i)}\right] &=
   \frac{1}{B^{|S|}} \E\left[ \prod_{i \in S} \prod_{e \in E(i)} 
    \scalprod{\b{g}_e, \b{v}_{i, V(i)}} \right] \\
    &=
   \frac{1}{B^{|S|}} \prod_{(a, b) = e \in E(G)} \E\left[ \scalprod{\b{g}_e, \b{v}_{a, V(a)}} \scalprod{\b{g}_e, \b{v}_{b, V(b)}}\right] \\
    &= \frac{1}{B^{|S|}} \prod_{(a, b) \in E(G)} \scalprod{\b{v}_{a, V(a)}, \b{v}_{b, V(b)}} 
 = \frac{\rho(G)}{ B^{|S|}}
  \end{align*}
  Thus we want to bound the expectation of $\left|\prod_{i \in S}
  \frac{\beta_{i, V(i)}}{B} - \prod_{i \in S} \alpha_{i, V(i)}\right|$ by $
\tau$.
  This can be shown to be of order $m\cdot \exp(-B^{2/m}/2)$, because each
  $\beta_{i, V(i)}$ is a product of at most $|E(G)| \le m$ independent gaussians.
  Thus setting $B$ of order $(\log 1/\tau)^{m/2}$ we get the desired
  error bound.
\end{proof}

\subsection{Setup for the Algorithm}

Let $M \leq 2^k \cdot k^{2m}$ be the number of odd-degree graphs on some
$S \subseteq [k]$ with $\hat{P}(S) \ne 0$ and at most $m$ edges, and let
$G_1, \ldots, G_M$ be these graphs.  Further write $S_t = V(G_t)
\subseteq [k]$ for the vertex set of $G_t$.  For a correlation matrix
$\rho \in \R^{k \times k}$, let $q(\rho)$ be the vector $q(\rho) = (\rho(G_1), \rho(G_2),
\ldots, \rho(G_M)) \in \R^M$, and let $Q \subseteq \R^M$ be the convex
hull of $\{q(\rho): \text{$\rho$ is $P$-supported}\}$.

Note that a $\Lambda$ such that $\Lambda(G_i) = 0$ for all $1 \le i
\le M$ is precisely a convex combination $\Lambda$ of $\rho$'s such that
$\E_{\rho \sim \Lambda}[q(\rho)] = \b{0}$.  In other words since $P$
does not satisfy the condition of Theorem~\ref{thm:main} we have
that $Q$ does not contain the origin.  Furthermore $Q$ is compact and
so we can find a separating hyperplane $(\gamma_1, \ldots, \gamma_M)$
such that $\sum_{t=1}^M \gamma_t \rho(G_t) > \delta$ for every
$P$-supported $\rho$ and some universal constant $\delta$ (depending
only on $P$).

Now let $\tau = \frac{\delta}{4M}$ and set $B = \poly (\log 1/\tau)$
large enough to make Lemma~\ref{lemma:round-layer} work for all the
graphs $G_1, \ldots, G_M$.  In our algorithm, we are going to choose
one $t \in [M]$ at random and then the algorithm is going to focus
solely on the terms involving layers $S_t$.  More precisely, as we
shall see in the next section, $t$ should be chosen with probability
proportional to $\left| \frac{\gamma_t}{\hat{P}(S_t)} \right| \cdot
B^{|S_t|}$.  In order for this to make sense, we therefore need that
\begin{equation}
  \label{eq:alg normalization}
  \sum_{t=1}^M \left| \frac{\gamma_t}{\hat{P}(S_t)} \right| \cdot B^{|S_t|} \le 1.
\end{equation}
Fortunately, we can assume without loss of generality that this holds:
since $B^{|S_t|} \le B^k$ depends sub-linearly (in fact even
poly-logarithmically) on $1/\tau$, dividing each $\gamma_t$ by some
factor $f > 1$ causes $\delta$ and $\tau$ to also be divided by $f$,
which in turn changes $B^{|S_t|}$ to $\poly \log (f/\tau) = o(f
B^{|S_t|})$, so that the sum in the left hand side of \eqref{eq:alg
  normalization} decreases by a factor which is super-constant in $f$.
Hence choosing $f$ a sufficiently large constant, we can make
\eqref{eq:alg normalization} hold.

\subsection{The Rounding Algorithm}
\label{sec:full-algorithm}

We are now ready to describe the algorithm.  Without loss of
generality, we may assume that we are given a
$(1-\epsilon)$-satisfiable $\MaxPartCSP(P)$ instance where $\epsilon >
0$ is some sufficiently small constant (depending on $P$) to be
determined later.  If the instance is not $(1-\epsilon)$-satisfiable
then a random assignment already gives an approximation ratio of
$\frac{|P^{-1}(1)|}{2^k(1-\epsilon)}$.

By Markov's inequality, for at least a $1 - \sqrt{\epsilon}$ fraction
of constraints $(T_i, P_i)$ we have $\E_{x \sim \mu_i}[P_i(x)] \ge 1 -
\sqrt{\epsilon}$.  In other words, $\mu_i$ has a $1-\sqrt{\epsilon}$
fraction of its mass on $P_i^{-1}(1)$.

\begin{claim}
  Given a correlation matrix $\rho \in \R^{k \times k}$ of a
  distribution which is $(1-\sqrt{\epsilon})$-supported on
  $P^{-1}(1)$, there is a $P$-supported correlation matrix $\rho'$
  such that $|\rho(G) - \rho'(G)| \le \sqrt{\epsilon} 2^m$ for every
  $G$ on $m$ edges.
\end{claim}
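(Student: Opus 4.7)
The plan is to take $\mu'$ to be the conditional distribution of $\mu$ on $P^{-1}(1)$, and let $\rho'$ be its correlation matrix. Concretely, set $p = \mu(P^{-1}(1))$, which by assumption satisfies $p \ge 1 - \sqrt{\epsilon}$, and define $\mu'(x) = \mu(x)/p$ for $x \in P^{-1}(1)$ (and $0$ otherwise). Then $\mu'$ is $P$-supported, and since $P$ is even we may symmetrize via $x \mapsto (\mu'(x) + \mu'(-x))/2$ to obtain an unbiased representative without altering any $\E[x_i x_j]$. So the only real question is whether the correlation matrix $\rho'$ of this $\mu'$ is close to $\rho$ in the graph-product sense required by the claim.

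The first step of the argument would be to control the entrywise distance $|\rho_{ij} - \rho'_{ij}|$. Writing $\rho_{ij} = \sum_x \mu(x)\,x_i x_j$ and $p\rho'_{ij} = \sum_{x \in P^{-1}(1)} \mu(x)\,x_i x_j$, the difference $\rho_{ij} - p\rho'_{ij}$ is a sum over $x \notin P^{-1}(1)$ of terms of magnitude at most $\mu(x)$, hence has absolute value at most $1 - p \le \sqrt{\epsilon}$. Combining this with $|\rho'_{ij}| \le 1$ via the triangle inequality,
\[
|\rho_{ij} - \rho'_{ij}| \;\le\; |\rho_{ij} - p\rho'_{ij}| + (1 - p)|\rho'_{ij}| \;\le\; 2(1-p) \;\le\; 2\sqrt{\epsilon}.
\]

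The second step would lift this entrywise bound to the graph quantity $\rho(G) = \prod_{e \in E(G)} \rho_e$. Because $\rho(G)$ and $\rho'(G)$ are each products of $m$ numbers in $[-1,1]$ (one per edge of $G$, counted with multiplicity), I would invoke the standard telescoping inequality $|\prod_e a_e - \prod_e b_e| \le \sum_e |a_e - b_e|$ valid for $|a_e|, |b_e| \le 1$, which yields $|\rho(G) - \rho'(G)| \le 2m\sqrt{\epsilon}$. Since $2m \le 2^m$ for every integer $m \ge 0$, this is at most $\sqrt{\epsilon}\,2^m$ as required.

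I do not anticipate any substantive obstacle here: the only conceptual thing to keep in mind is that we need a genuinely $P$-supported $\rho'$ rather than an arbitrary nearby matrix, which is guaranteed by construction from $\mu'$. The slight slack between the natural bound $2m\sqrt{\epsilon}$ and the stated bound $\sqrt{\epsilon}\,2^m$ suggests the authors chose the cleaner exponential form for later convenience rather than tightness.
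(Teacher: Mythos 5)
Your proof is correct; the paper in fact states this claim without giving a proof, and your argument (condition $\mu$ on $P^{-1}(1)$, symmetrize using evenness of $P$, bound each entry change by $2(1-p)\le 2\sqrt{\epsilon}$, then telescope over the $m$ edges of $G$) is precisely the natural argument the authors evidently intend. Your resulting bound $2m\sqrt{\epsilon}$ is even slightly stronger than the stated $\sqrt{\epsilon}\,2^m$ since $2m\le 2^m$, so the claim follows.
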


Thus, setting $\epsilon < \frac{\delta^2}{4 \cdot 2^m}$ we have that $\sum_i
\gamma_i \rho(G_i) > \delta/2$ for all correlation matrices of
distributions which are $(1-\sqrt{\epsilon})$-supported on
$P^{-1}(1)$.

Now the rounding algorithm is as in Figure~\ref{fig:algorithm}.

\begin{figure}[h]
  \framebox{
    \parbox{0.95\textwidth}{
      \begin{enumerate}
      \item Pick $t \in [M]$ with probability $\left|\frac{\gamma_t}{\hat{P}(S_t)}\right|\cdot B^{|S_t|}$.
      \item Using Lemma \ref{lemma:round-layer}, round the variables in layers in $S_t$ using graph $G_t$. For every other layer, set all the variables in
        that layer to $0$.
      \item If $\sign(\gamma_t \hat{P}(S_t)) = -1$, then select an odd sized subset $A$ of $S_t$ at random, else select an even sized subset $A$ of $S_t$
        at random.  Flip the sign of all variables in layers in $A$.
      \end{enumerate}
    }
  }
  \caption{Rounding algorithm for $\MaxPartCSP(P)$}
  \label{fig:algorithm}
\end{figure}

Now, fix the value of $t$ chosen in step 1, and
let $\alpha_{i, j}$ be the rounded value to the variable $x_{i,j}$ as in Lemma \ref{lemma:round-layer} and $\tilde{\alpha}_{i,j}$
be equal to  $\alpha_{i, j}$ or its negation after the third step above.
The assignment $\tilde{\alpha}$ is in $[-1,1]^{n \times k}$ but as the
objective function is multilinear it can be greedily adjusted to an
integral assignment in $\{-1,1\}^{n \times k}$ without decreasing the
objective value, so it suffices to study $\tilde{\alpha}$.
Let $S \subseteq [k]$ be some set of layers and $V: S \rightarrow [n]$ be any choice of
variables from these layers. Observe that for every $S \ne S_t$,
\[
\E_\alpha \left[ \prod_{i \in S} \tilde{\alpha}_{i, V(i)} \right]  =  0.
\]
To see this, note that  if $S \not\subseteq S_t$, then the variables in layers $S \setminus S_t$ are set to $0$.  On the other hand if $S \subset S_t$, then we flip the signs of a random set of layers of either odd or even size.  As the distribution over which layers get flipped is $(|S_t|-1)$-wise independent, the layers of any $S \subset S_t$ get flipped completely uniformly.

On the other hand, by Lemma \ref{lemma:round-layer} and
the way the signs are flipped in the third step, if $S = S_t$ we have
\[
\E_\alpha [ \prod_{i \in S} \tilde{\alpha}_{i, V(i)} ]  = \sign(\gamma_t \hat{P}(S_t)) \cdot \frac{\rho(G_t)}{B^{|S_t|}} \pm \tau.
\]
Thus, taking the expectation of $\E_{\alpha}[\prod_{i \in S} \tilde{\alpha}_{i, V(i)}]$ over $t \in [M]$ chosen according to Step 1, we have
\begin{align*}
\E_t\left[ \E_\alpha [ \prod_{i \in S} \tilde{\alpha}_{i, V(i)} ]\right]  &= \sum_{t: S_t = S} \left|\frac{\gamma_t}{\hat{P}(S_t)}\right| \cdot  B^{|S_t|}
   \left( \sign(\gamma_t \hat{P}(S_t)) \cdot \frac{\rho(G_t)}{B^{|S_t|}} \pm \tau \right) \\
   & = \sum_{t: S_t = S} \frac{\gamma_t}{\hat{P}(S_t)} \rho(G_t) \pm \tau
\end{align*}

Now we can analyze the probability that any specific constraint is
satisfied.  Let $(T_i, P_i)$ be a constraint involving one variable
from each layer which is $(1-\sqrt{\epsilon})$-satisfied by the SDP
solution.
In other words, $\E_{x \sim \mu_i}[P_i(x)] \ge
1-\sqrt{\epsilon}$.  Write $V: [k] \rightarrow [n]$ for the variables
involved (i.e., $T_i = \{(i',V(i')): i' \in [k]\}$) and write $P_i(x)
= P(b_1 x_{1,V(1)}, \ldots, b_k x_{k,V(k)})$ for some signs $b_1,
\ldots, b_k \in \{-1,1\}$.

We also associate the domain of $P_i$ with $\{-1,1\}^k$ in the obvious
way.  As such, it is easy to verify that the Fourier coefficient
$\hat{P_i}(S)$ for $S \subseteq [k]$ satisfies $\hat{P_i}(S) =
\hat{P}(S) \chi_S(b)$.  Furthermore, let $\tilde{\mu_i}$ be the
distribution over $\{-1,1\}^k$ obtained by sampling from $\mu_i$ and
performing coordinatewise multiplication by $b$, and let $\rho$
(resp.~$\tilde{\rho}$) denote the correlation matrix of $\mu_i$
(resp.~$\tilde{\mu_i}$).  Then, for any graph $G$ we have
\[
\rho(G) = \prod_{(a,a') \in E} \rho_{a,a'} = \prod_{(a,a') \in E} \tilde{\rho}_{a,a'} \chi_{\{a,a'\}}(b) = \tilde{\rho}(G) \cdot \chi_{\Odd(G)}(b)
\]
where $\Odd(G)$ denotes the set of odd-degree vertices of $G$.  In
particular for $G = G_t$ all vertices have odd degree so $\Odd(G_t) =
S_t$.  Using this and noting that $\tilde{\mu_i}$ is
$(1-\sqrt{\epsilon})$-supported on satisfying assignments of $P$ we
see that
\[
\sum_t \gamma_t \rho(G_t) \chi_{S_t}(b) = \sum_{t} \gamma_t
\tilde{\rho}(G_t) > \delta /2.
\]
We then have the following, where the expectation below is taken over
all the random choices of the algorithm (including that of $t \in [M]$).
\begin{align*}
\E [ P_i(\tilde{\alpha}_{1,V(1)}, \cdots, \tilde{\alpha}_{k, V(k)}) ]  &=
\hat{P_i}(\emptyset) + \sum_{\substack{\emptyset \ne S \subseteq [k]\\\hat{P_i}(S)\not= 0}}
   \hat{P_i}(S) \E [\prod_{i\in S} \tilde{\alpha}_{i, V(i)}   ] \\
&  =   \hat{P}(\emptyset) + \sum_{\substack{\emptyset \ne S \subseteq [k]\\\hat{P}(S)\not= 0}}
   \hat{P}(S) \chi_S(b)  \sum_{t: S_t = S} \frac{\gamma_t}{\hat{P}(S_t)} \rho(G_t) \pm \tau \\
 &   =   \hat{P}(\emptyset) + \sum_{t=1}^M \gamma_t \rho(G_t) \chi_{S_t}(b) \pm \tau M 
   >    \hat{P}(\emptyset) + \frac{\delta}{4}.
\end{align*}

Thus the total fraction of constraints satisfied by the algorithm is in expectation
at least $(1-\sqrt{\epsilon})(\hat{P}(\emptyset) + \delta/4)$ which is
at least $\hat{P}(\emptyset) + \delta/8$ assuming $\epsilon <
(\delta/8)^2$.

In other words, the algorithm finds a $(\hat{P}(\emptyset) +
\delta/8)$-approximate solution on all instances with value at least
$1-\epsilon$.  Combining this with a random assignment gives an
approximation better than $\hat{P}(\emptyset)$ for any instance, and
concludes the proof of approximability of $\MaxPartCSP(P)$.

\section{Hardness}
\label{sec:hardness}

In this section we show that any $P$ which satisfies the condition of
Theorem~\ref{thm:main} is approximation resistant, assuming the UGC.
As usual, we prove hardness by designing an appropriate dictatorship
test, which is given in Section~\ref{sec:dictatorship test}, followed
by the (standard) hardness reduction in Section~\ref{sec:reduction}.

\newcommand{\M}{\mathcal{M}}
\newcommand{\Tester}{\mathcal{T}}

\subsection{Dictatorship Test}
\label{sec:dictatorship test}

\begin{theorem}\label{thm:dictatortest}
  Let $P$ satisfy the condition of Theorem~\ref{thm:main}.
  Then for every $k$ and $\epsilon > 0$ there exists a $\delta >
  0$ such that the following holds for all $n$.

  There is a randomized algorithm $\Tester$ which, given oracle access
  to $k$ \emph{odd} functions $f_1, \ldots, f_k: \{-1,1\}^n \rightarrow [-1,1]$,
  produces $k$ queries $X_1, \ldots, X_k \in \{-1,1\}^n$ such that
  \begin{description}
  \item[(Yes)] If $f_1(x) = f_2(x) = \ldots = f_k(x) = x_i$ are the
    same dictator function, then $\E[P(f_1(X_1), \ldots, f_k(X_k))] \ge
    1 - \epsilon$.
  \item[(No)] If all $f_i$'s have $\Inf_j^{\le 1/\delta}(f_i) \le \delta$
    for all $j \in [n]$ then $\E[P(f_1(X_1), \ldots, f_k(X_k))] \le
    \hat{P}(\emptyset) + \epsilon$
  \end{description}
\end{theorem}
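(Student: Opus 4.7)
The plan is to construct $\Tester$ using the $m$-vanishing distribution guaranteed by the hypothesis. Fix a sufficiently large integer $m$ (to be determined by $\epsilon$ and $k$) and let $\Lambda$ be $m$-vanishing on $P$. For each $\rho$ in the support of $\Lambda$ fix an unbiased $P$-supported distribution $\mu(\rho)$ with correlation matrix $\rho$, and let $\tilde{\mu}(\rho)$ be its perturbation obtained by mixing in a tiny $\nu > 0$ of the uniform distribution on $\{-1,1\}^k$, so that $\tilde{\mu}(\rho)$ has full support as required by Theorem~\ref{thm:invariance}. The tester $\Tester$ samples $\rho \sim \Lambda$, draws a $k \times n$ Boolean matrix $X$ with columns i.i.d.\ from $\tilde{\mu}(\rho)$, and outputs the rows $X_1,\ldots,X_k$. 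Completeness is immediate: if each $f_j$ is the $i$-th dictator then the acceptance probability is just the probability that a single column of $X$ satisfies $P$, which is at least $1-\nu \geq 1-\epsilon$.

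For soundness I would start from the Fourier expansion
\[
\E[P(f_1(X_1),\ldots,f_k(X_k))] = \hat{P}(\emptyset) + \sum_{\substack{\emptyset \neq S \subseteq [k]\\ \hat{P}(S)\neq 0}} \hat{P}(S)\,\E\!\left[\prod_{j\in S}f_j(X_j)\right],
\]
and for each $S$ apply Theorem~\ref{thm:invariance} (with $f_j \equiv 1$ for $j \notin S$) to swap the Boolean $X_j$'s for jointly Gaussian vectors $G_j \in \R^n$ with the same covariance, at an additive cost at most $\epsilon / 2^{k+1}$ per term by taking $\delta$ small enough. I would then Fourier-expand each $f_j^{\leq 1/\delta}$ and use the coordinatewise independence of the Gaussian columns to write the Gaussian expectation as
\[
\sum_{(T_j)_{j\in S}}\prod_{j\in S}\hat{f}_j(T_j)\,\prod_{i=1}^n\E\!\left[\prod_{j\in S:\, i\in T_j} G_j^i\right].
\]
Lemma~\ref{lemma:gaussproduct} rewrites each per-coordinate factor as a sum over perfect matchings of $\{j \in S : i \in T_j\}$, each matching contributing $\prod_{(j,j')} \rho_{j,j'}$. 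Combining the per-coordinate matchings into a single multigraph $G$ on vertex set $S$, the collected contribution of $(T_j)_{j \in S}$ becomes a signed sum of values $\rho(G)$.

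The critical structural observation is that vertex $j$'s degree in $G$ is exactly $|T_j|$, which is odd because $f_j$ is odd; hence every $G$ that appears has only odd-degree vertices. Moreover each such $G$ has at most $\tfrac{1}{2}\sum_{j \in S}|T_j| \leq k/(2\delta)$ edges. Setting $m := \lceil k/(2\delta) \rceil$, the $m$-vanishing property of $\Lambda$ gives $\E_{\rho \sim \Lambda}[\rho(G)] = 0$ for every such $G$ and every $S \neq \emptyset$ with $\hat{P}(S) \neq 0$. Averaging the Fourier expansion over $\rho \sim \Lambda$ therefore kills all non-constant terms, leaving $\E[P(\ldots)] \leq \hat{P}(\emptyset) + \epsilon$ as required.

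The main obstacle I anticipate is the careful management of interdependent parameters: $\delta$ is chosen from Theorem~\ref{thm:invariance} as a function of $\epsilon$, $k$, and the min-probability $\nu/2^k$ of $\tilde{\mu}(\rho)$; $m = \lceil k/(2\delta) \rceil$ then follows; and $\nu$ must be small enough that the perturbation error in $\rho(G)$ (of order $m\nu$, since $G$ has at most $m$ edges and each edge deforms $\rho$ by $O(\nu)$) is absorbed in $\epsilon$. Because the dependence in Theorem~\ref{thm:invariance} is polynomial in $\nu$, these constraints can be simultaneously satisfied by taking $\nu$ sufficiently small in terms of the final $m$. Apart from this bookkeeping, the structural heart of the argument is the correspondence between Gaussian-moment matching expansions and odd-degree multigraphs, which is precisely what the $m$-vanishing condition was designed to exploit.
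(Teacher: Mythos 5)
Your construction and analysis follow essentially the same route as the paper: the same tester (sample $\rho \sim \Lambda$, take an unbiased $P$-supported distribution with correlation matrix $\rho$ smoothed by a little uniform noise, sample the $n$ columns i.i.d.), the same soundness argument via Theorem~\ref{thm:invariance} and Lemma~\ref{lemma:gaussproduct}, the identification of each surviving term with an odd-degree multigraph on $S$ (odd precisely because the $f_j$ are odd), and the $m$-vanishing property to annihilate every non-constant Fourier term. The one place where your sketch does not go through as written is the bookkeeping for the smoothing parameter $\nu$. You propose to treat the mixing with $\nu U_k$ as an \emph{additive} perturbation of $\rho(G)$ of size $O(m\nu)$ and to absorb it into $\epsilon$. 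But (i) this creates exactly the circular dependence you acknowledge ($\delta$ from Theorem~\ref{thm:invariance} depends on the min-probability $\nu/2^k$, your $m = \lceil k/(2\delta)\rceil$ depends on $\delta$, and then $\nu$ must be small in terms of $m$), and the claim that the dependence of $\delta$ on $\nu$ is ``polynomial'' so that the circle closes is unsubstantiated; and (ii) even granting an error of $m\nu$ per multigraph, that error must be summed against the coefficients $\prod_{j \in S}\hat{f_j}(T_j)$ over all tuples of sets and all matchings, and the total absolute mass of these coefficients (a product of Fourier $\ell_1$-norms of the truncated functions, times the number of matchings per tuple) grows with $n$, so the resulting additive error is not uniform in $n$, whereas the theorem demands a $\delta$ independent of $n$.

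Both problems disappear with the observation the paper uses, and which you should substitute for your additive analysis: take the smoothed distribution to be $(1-\epsilon)\eta + \epsilon U_k$, so its correlation matrix is $\rho' = (1-\epsilon)\rho + \epsilon I$; since the multigraphs in the expansion have no self-loops and $I$ vanishes off the diagonal, $\rho'(H) = (1-\epsilon)^{|E(H)|}\rho(H)$ for every such $H$. Hence after averaging over $\rho \sim \Lambda$ each non-constant term is \emph{exactly} zero---there is no perturbation error to absorb, no constraint tying $\nu$ to $m$, and one may simply set $\nu = \epsilon$, so that $\delta$ and $m$ depend only on $k$ and $\epsilon$. With that fix, your tighter choice $m = \lceil k/(2\delta)\rceil$, which exploits the degree-$1/\delta$ truncation, is legitimate and even a bit sharper than the paper's crude $m = kn$; either choice is admissible since the hypothesis supplies an $m$-vanishing $\Lambda$ for every $m$.
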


Let $m = k \cdot n$ and let $\Lambda$ be a distribution over
$P$-supported correlation matrices which is $m$-vanishing on $P$.  Let
$U_k$ be the uniform distribution over $\{-1,1\}^k$.  The tester
$\Tester$ is described in Figure~\ref{fig:tester}.

\begin{figure}[h]
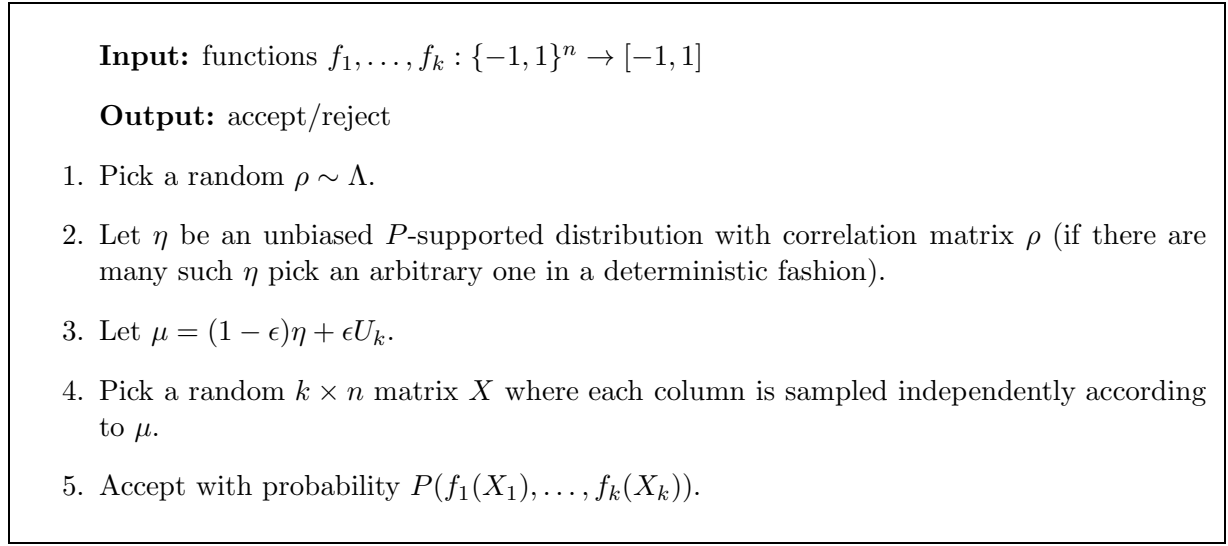

  \framebox{
    \parbox{0.95\textwidth}{
      \begin{enumerate}
      \item[] \textbf{Input:} functions $f_1, \ldots, f_k: \{-1,1\}^n \rightarrow [-1,1]$
      \item[] \textbf{Output:} accept/reject

      \item Pick a random $\rho \sim \Lambda$.
      \item Let $\eta$ be an unbiased $P$-supported
        distribution with correlation matrix $\rho$ (if there are many such $\eta$ pick an arbitrary one in a deterministic fashion).
      \item Let $\mu = (1-\epsilon)\eta + \epsilon U_k$.
      \item Pick a random $k \times n$ matrix $X$ where each column is sampled
        independently according to $\mu$.
      \item Accept with probability $P(f_1(X_1), \ldots, f_k(X_k))$.
      \end{enumerate}
    }
  }
  \caption{Dictatorship Test}
  \label{fig:tester}
\end{figure}

That the completeness is $1-\epsilon$ follows immediately from
$\supp(\eta) \subseteq P^{-1}(1)$.

Let us then analyze the soundness.  The acceptance probability can be written
as
\begin{align}
\Pr[\text{$\Tester$ accepts}] &= \E_{\rho \sim \Lambda}\left[\E_{X \sim \mu}[P(f_1(X_1), \ldots, f_k(X_k))]\right] \nonumber\\
&= \hat{P}(\emptyset) + \E_{\rho}\left[\sum_{S \ne \emptyset} \hat{P}(S) \E_X[\prod_{i\in S} f_i(X_i)]\right].
\label{eqn:test accept prob}
\end{align}
Let $\rho'$ be the correlation matrix of $\mu$.  Note that $\rho' =
(1-\epsilon)\rho + \epsilon I$.
 Fix the value of $\rho$ for the moment, and let
$G$ be a random $k \times n$ matrix of standard Gaussians with the
same covariances as $X$ (i.e., the columns are independent and in the
$j$'th column we have $\E[G_{i,j}G_{i',j}] = \rho'_{i,i'} =
(1-\epsilon)\rho_{i,i'}$ for $i \ne i'$).

Next, set $\delta$ small enough so that Theorem~\ref{thm:invariance}
gives that if $\Inf_j^{\le 1/\delta}(f_i) \le \delta$ for all $i \in
[k]$, $j \in [n]$ then
\begin{equation}
  \label{eqn:approx by gaussians}
\left|\E_X[\prod_{i\in S} f_i(X_i)] - \E_G[\prod_{i\in S} f^{\le 1/\delta}_i(G_i)]\right| \le \epsilon/2^k
\end{equation}
for all $S$.  Define $f'_i = f^{\le 1/\delta}$.
We need to understand expressions of the form $\E_G[\prod_{i \in
    S} f'_i(G_i)]$ for $S \subseteq [k]$.
Expanding $f'_i = \sum_{\substack{T \subseteq[n]}} \hat{f'}(T) \chi_T$ and applying
Lemma~\ref{lemma:gaussproduct}, we obtain
\begin{align}
\E_G\left[\prod_{i \in S} f'_i(G_i)\right] &= \sum_{\{T_i\}_{i \in S}} \prod_{i \in S} \hat{f'_i}(T_i) \prod_{j=1}^n \E\left[\prod_{i: j \in T_i} G_{ij}\right] \nonumber\\
&= \sum_{\{T_i\}_{i \in S}} \prod_{i \in S} \hat{f'_i}(T_i) \prod_{j=1}^n \sum_{M_j \in \M(\{i: j \in T_i\})} \prod_{i,i' \in M_j} \rho'_{i,i'},
\label{eqn:prodexp}
\end{align}
where we write $\M(S)$ for the set of perfect matchings on the complete
graph with vertex set $S$.

\newcommand{\bT}{\b{T}}
\newcommand{\bM}{\b{M}}

For a choice $\bT = \{T_i\}_{i \in S}$ of $T_i$'s, let $c(\bT) =
\prod_{i \in S} \hat{f'_i}(T_i)$.  Further, for a choice of matchings
$\bM = (M_1, \ldots, M_n)$ let $H(\bM)$ denote the multigraph being
the union of $M_1, \ldots, M_n$.  With a slight abuse of notation,
write $\M(\bT)$ for the set of $\bM$'s for a given $\bT$; i.e.,
$\M(\bT) = \{(M_1, \ldots, M_n): M_j \in \M(\{i: j \in T_i\})$.  With
all this cumbersome notation in place, the equation above simplifies
to
\begin{align}
  \eqref{eqn:prodexp} &= \sum_{\bT} \sum_{\bM \in {\bf\M}(\bT)} c(\bT) \rho'(H(\bM)),
  \label{eqn:prodexp2}
\end{align}
where $\bT$ ranges over all $\{T_i \subseteq [n]\}_{i \in S}$.  Note
that since $f_i$ (and therefore also $f'_i$) is odd, we can restrict the sum to $\bT$ such that
each $|T_i|$ is odd, implying that $H(\bM)$ is always odd degree.

Plugging \eqref{eqn:prodexp2} into \eqref{eqn:approx by gaussians}, we have
\[
\left|\E_X[\prod_{i\in S} f_i(X_i)] - \sum_{\bT} \sum_{\bM \in {\bf\M}(\bT)} c(\bT) \rho'(H(\bM))\right| \le \epsilon/2^k
\]
Finally, plugging this into \eqref{eqn:test accept prob} and using the
identity $\rho' = (1-\epsilon)\rho + \epsilon I$ yields
\begin{align*}
\Pr[\text{$\Tester$ accepts}] &\le \hat{P}(\emptyset) + \E_{\rho}\left[\sum_{S \ne \emptyset} \left( \hat{P}(S) \sum_{\bT} \sum_{\bM \in {\bf\M}(\bT)} c(\bT) \rho'(H(\bM)) + \epsilon/2^k\right) \right] \\
&\le \hat{P}(\emptyset) + \epsilon + \sum_{S \ne \emptyset} \hat{P}(S) \sum_{\bT} \sum_{\bM \in {\bf\M}(\bT)} c(\bT) (1-\epsilon)^{|E(H(\bM))|} \E_{\rho}[\rho(H(\bM))] \\
& = \hat{P}(\emptyset) + \epsilon,
\end{align*}
where the last equality follows by the $m$-vanishing property of
$\Lambda$: $H(\bM)$ has at most $n |S|/2 < nk = m$ edges, and so for
each $S$ either $\hat{P}(S) = 0$ or $\E_{\rho}[\rho(H(\bM))] =
\Lambda(H(\bM)) = 0$.

\subsection{Hardness Reduction}
\label{sec:reduction}

Given the dictatorship test as in Theorem \ref{thm:dictatortest}, a UGC-based hardness reduction
can be designed in a standard manner. Some care needs to be taken however to ensure that the
CSP instance produced by the reduction is $k$-partite. As is standard, we present the
reduction as a Probabilistically Checkable Proof (PCP) for NP whose acceptance predicate matches the
predicate $P$, has completeness $1-o(1)$ and soundness $\frac{|P^{-1}(1)|}{2^k} + o(1)$.

The PCP is based on the conjectured NP-hard instance  $\Lambda = (U, V, E, \Pi, [L])$ of Unique Games
as in Definition \ref{def:ug}. Let  $L$ and $\gamma$ be as in
Conjecture \ref{conj:ugc}.  The PCP proof consists of $k$ layers where the bits in the
$i$'th layer correspond to $V_i \times \{-1,1\}^L$ and $V_i$ is a copy of the ``right hand side"
$V$ of the UG instance. For any $v_i \in V_i (= V)$,
the set of bits $\{ v_i \} \times \{-1,1\}^L$ correspond to the bits of the long code
of the label of $v_i$. In a ``correct" proof, the assignment to these bits corresponds to a
dictatorship function $f(x) = x_j$ where $j \in [L]$ is the intended label of $v_i$.

For a function $g:\{-1,1\}^L \rightarrow \{-1,1\}$ and a permutation $\pi: [L] \rightarrow [L]$, let
$g \circ \pi^{-1}: \{-1,1\}^L \rightarrow \{-1,1\}$ denote the function defined as
$g \circ \pi^{-1} (x) = g (x_{\pi^{-1}(1)}, \ldots, x_{\pi^{-1}(L)})$.
The PCP verifier
proceeds as in Figure~\ref{fig:verifier}.

\begin{figure}[h]
  \framebox{
    \parbox{0.95\textwidth}{
      \begin{enumerate}
      \item Pick a random vertex $u \in U$.
      \item Pick $k$ random neighbors of $u$, namely $v_1, \ldots, v_k \in V$.
      \item Let $g_1, \ldots, g_k$ be the functions (supposed long codes) for $v_1 \in V_1, \ldots,
        v_k \in V_k$ respectively.
      \item Let $f_1, \ldots, f_k$ be the permuted versions of $g_1, \ldots, g_k$ respectively, i.e.,
        $f_i = g_i \circ \pi_i^{-1}$, $\pi_i = \pi_{e_i = (u, v_i)}$ for $1 \leq i \leq k$.
      \item Run the dictatorship test as in Theorem \ref{thm:dictatortest} on $(f_1,\ldots,f_k)$.
      \end{enumerate}
    }
  }
  \caption{PCP Verifier}
  \label{fig:verifier}
\end{figure}

\subsubsection{Completeness}

Let $\ell: U \cup V \rightarrow [L]$ be a labeling to the UG instance that satisfies $1-\gamma$ fraction of its edges. For every $v_i \in V_i (= V)$, let $g_i$ be the long code of $\ell(v_i)$, i.e. $g_i(x) = x_{\ell(v_i)}$. With probability at least $1- k \gamma$, all $k$ edges $(u,v_1), \ldots, (u, v_k)$ are
satisfied by the labeling and whenever this holds, the dictatorship test accepts with probability $1-\epsilon$. The latter conclusion follows by observing that if $\pi_i (\ell(u) ) = \ell(v_i)$
for every $1 \leq i \leq k$, then in the PCP test above,
$$f_i (x) =  g_i \circ \pi_i^{-1} (x)  =
g_i( x_{\pi_i^{-1}(1)}, \ldots, x_{\pi_i^{-1}(L)} ) =  x_{\pi_i^{-1} ( \ell(v_i))} = x_{\ell(u)},$$
and hence $f_1, \ldots, f_k$ are identical dictatorship functions.

\subsubsection{Soundness}

Assume that the soundness of the UG instance is at most $\gamma$ which is chosen to be sufficiently small
beforehand. Fix any layer $i$ in the PCP proof and let $g_{i,v}$ be the supposed long code corresponding
to the vertex $v$ (in the $i$'th layer). For any $u \in U$, define the function $f_{i,u}$ which is the
average of functions over the neighbors of $u$ after appropriate permutation:
$$ f_{i,u} (x) =
   \E_{v: (u,v) \in E} \left[  g_{i,v}  \circ \pi_{(u,v)}^{-1}  (x) \right].$$
Note that $f_{i,u}$ are $[-1,1]$-valued. By a standard argument, we may assume that for
all but $\sqrt{\gamma}$ fraction of $u \in U$, the function $f_{i,u}$ has no coordinate that has
degree $1/\delta$ influence $\delta$ (referred to as a low-influence function for brevity).

Otherwise, suppose that for $\sqrt{\gamma}$ fraction of
$u$, $f_{i,u}$ has a  coordinate that has degree  $1/\delta$ influence $\delta$. For brevity, call any such coordinate simply as an influential
coordinate. The set of all influential coordinates has size bounded by
$1/\delta^2$. Assign this bounded set as the set of candidate labels for $u$.  For any
influential coordinate
$j \in [L]$, since $f_{i,u}$ is an average of $g_{i, v} \circ \pi_{(u,v)}^{-1}$
over neighbors of $u$, by an averaging argument, for at least $\delta/2$ fraction of the neighbors,
$\pi_{(u,v)}(j)$ is influential for $g_{i,v}$. All influential coordinates of $g_{i,v}$ are
assigned as the candidate labels for $v$. Now define a (randomized) labeling that selects one label at random from the candidate set of each vertex. The argument sketched implies  that this labeling satisfies
$\sqrt{\gamma} \cdot \delta/2 \cdot \delta^4$ fraction of the UG edges. This is a contradiction if the
soundness $\gamma$ was chosen to be sufficiently small to begin with.

Hence except with probability $k\sqrt{\gamma}$, the PCP verifier chooses $u \in U$ such that
the $k$ functions $f_{i,u}$, one in each layer, are all low influence functions. Whenever this holds, the analysis of the
dictatorship test implies that the verifier accepts with probability at most $\frac{|P^{-1}(1)|}{2^k} + \epsilon$. One only needs to observe that for a fixed $u$, the verifier picks its random neighbor
in each layer and hence running the test on these random neighbors (one in each layer)
has the same effect as running the test
on the (possibly non-boolean) averaged functions (again, one in each layer). Formally, fixing $u$,
\begin{multline*}
\E_{v_1, \ldots, v_k} \left[ \E_\Tester [  P(g_{1,v_1}\circ \pi_1^{-1} (X_1), \ldots,
    g_{k,v_k}\circ \pi_k^{-1} (X_k)    )   ]     \right]
  \\
  = \E_\Tester \left[  P( \E_{v_1} [g_{1,v_1}\circ \pi_1^{-1} (X_1)], \ldots, \E_{v_k} [g_{k,v_k}\circ \pi_k^{-1} (X_k)]            )               \right] \\
  = \E_\Tester \left[  P (f_{1,u}(X_1), \ldots, f_{k, u}(X_k)   )    \right].
 \end{multline*}

\section{Acknowledgements} 

We are grateful to Johan H\aa stad for many insightful discussions throughout this work, and to the anonymous referees for their helpful suggestions.



\bibliographystyle{alpha}
\bibliography{resist}

\begin{appendix}

\end{appendix}
\end{document}